\theoremstyle{plain}
  \newtheorem{Theorem}{Theorem}
  \newtheorem{Proposition}{Proposition}
  \newtheorem{Algorithm}{Algorithm}
\theoremstyle{remark}
  \newtheorem{Remark}{Remark}
  \newtheorem{Example}{Example}
\author{T.\,A. Averina\,${}^1$, K.\,A. Rybakov\,${}^2$}
\title[Analysis and conditional optimization of projection estimates for distribution\dots]{Analysis and conditional optimization of projection estimates for distribution of random variable using~Legendre~polynomials}
\begin{document}

\maketitle

\begin{center}

\vskip -3.5ex

${}^1$\,Institute of Computational Mathematics and Mathematical Geophysics,\\
Siberian Branch of the Russian Academy of Sciences;\\
630090, Novosibirsk, Akad. Lavrentyev Ave, 6; ata@osmf.sscc.ru\\

\vskip 1ex

${}^2$\,Moscow Aviation Institute (National Research University);\\
125993, Moscow, Volokolamskoe Hwy, 4; rkoffice@mail.ru
\end{center}

\vskip 2.5ex

\textbf{Abstract.} Algorithms for jointly obtaining projection estimates of the density and distribution function of a random variable using Legendre polynomials are proposed. For these algorithms, a problem of the conditional optimization is solved. Such optimization allows one to increase the approximation accuracy with minimum computational costs. The proposed algorithms are tested on examples with different degrees of smoothness of the density. A projection estimate of the density is compared to a histogram that is often used in applications to estimate distributions.

\vskip 0.5ex

\textbf{Keywords:} Monte Carlo method, projection estimate, density, distribution function, Legendre polynomials, conditional optimization

\vskip 0.5ex

\textbf{MSC:} 62E20, 65C05, 65D15

\makeatletter{\renewcommand*{\@makefnmark}{}
\footnotetext{Citation: Averina, T.; Rybakov, K. Analysis and conditional optimization of projection estimates for distribution of random variable using Legendre polynomials. {\em Algorithms} {\bf 2025}, {\em 18(8)}, 466. \url{https://doi.org/10.3390/a18080466}}
\footnotetext{This research study was carried out within the framework of the state assignment to the Institute of Computational Mathematics and Mathematical Geophysics, Siberian Branch, Russian Academy of Sciences (project FWNM-2025-0002).}}

\section{Introduction}\label{secIntro}

Statistical projection estimates in the Monte Carlo method were first proposed by N.N.~Chentsov~\cite{Che_DAN62}. He developed a general technique for optimizing such estimates; this technique, however, requires clarification in specific problems~\cite{Mih_NAA19}. In the paper~\cite{Ave_Math24}, projection estimates based on the Legendre polynomials for marginal probability densities of solutions to stochastic differential equations were proposed. The mean square error of the estimates was studied, and a comparison of both the obtained projection estimates and the histogram was carried out with examples. Analysis of the results showed that, on the same sample size, the projection estimate approximates the density more accurately. In addition, such an estimate is specified analytically, and it is a smooth function. So, it is preferable in, e.g., filtering problems and the control of nonlinear systems~\cite{AveRyb_NAA17, Rud_JCSSI23, AveRyb_RJNAM20, PanKar_Math23}.

This paper presents two statistical algorithms, based on algorithms using Legendre polynomials from~\cite{Ave_Math24}, for jointly obtaining projection estimates of the density and distribution function of a random variable.

When solving different problems by statistical estimation algorithms, it is important to make an optimal (consistent) choice of their parameters for finding the mathematical expectation of a certain functional that depends on a random variable. Therefore, this paper analyzes the mean square error of the projection estimate from this point of view. Such parameters are the projection expansion length and sample size. We solve a conditional optimization problem considered by G.A.~Mikhailov~\cite{Mih_00}. The objective of this problem is to minimize the algorithm complexity while achieving the required level of approximation accuracy. We study how to minimize the mean square error of the projection estimates of the density and distribution function by the equalization of its deterministic and stochastic components. The accuracy of the projection estimates also depends on the degree of smoothness of the density; therefore, in this paper, we consider the dependence of error not only on the projection expansion length but also on the degree of smoothness of the approximated function.

The obtained theoretical results are confirmed with examples using a two-parameter family of densities that allows one both to choose the degree of smoothness and to perform a simple calculation of expansion coefficients with respect to Legendre polynomials.

The rest of this paper has the following structure: Section~\ref{secLeg} contains the necessary information on the Fourier--Legendre series and the definition of projection estimates of the density and distribution function; in this section, relations for expansion coefficients of the density and distribution function with respect to Legendre polynomials are obtained. Section~\ref{secMonteCarlo} presents algorithms for jointly obtaining randomized projection estimates of the density and distribution function. The analysis and conditional optimization of randomized projection estimates are carried out in Section~\ref{secOptim}. Section~\ref{secTest} proposes a two-parameter family of densities with different degrees of smoothness and related distribution functions; it presents an algorithm for modeling the corresponding random variables, gives expansion coefficients of the considered densities and distribution functions, and studies the convergence rate of their expansions. Numerical experiments and their analysis are discussed in Section~\ref{secNumerics}. In Section~\ref{secHistogram}, a projection estimate of the density is compared with a histogram. Brief conclusions on the paper are given in Section~\ref{secConcl}.

\section{Using Legendre Polynomials for Randomized Projection Estimates}\label{secLeg}

\subsection{Fourier--Legendre Series}

The standardized Legendre polynomials $\{P_i\}_{i = 0}^\infty$ are defined on the set $\Omega = [-1,1]$ as follows~\cite{Sue_79}:
\[
  P_i(x) = \frac{1}{2^i i!} \, \frac{d^i (x^2-1)^i}{dx^i}, \ \ \ i = 0,1,2,\dots
\]

They form a complete orthogonal system of functions in the space $L_2(\Omega)$~\cite{Tri_10}, where
\[
  L_2(\Omega) = \biggl\{ u \colon \int_\Omega |u(x)|^2 dx < \infty \biggr\},
\]
and satisfy well-known recurrence relations:
\begin{gather}
  (i+1) P_{i+1}(x) = (2i+1) x P_i(x) - i P_{i-1}(x), \ \ \ i = 1,2,\dots, \label{eqNNLegRec1} \\
  (2i+1) P_i(x) = P'_{i+1}(x) - P'_{i-1}(x), \ \ \ i = 1,2,\dots, \label{eqNNLegRec2}
\end{gather}
where $P_0(x) = 1$ and $P_1(x) = x$. These relations can be formally applied for $i = 0$ when $P_{-1}(x) = 0$.

The explicit formula for the standardized Legendre polynomials is
\begin{equation}\label{eqLegExp}
  P_i(x) = \frac{1}{2^i} \sum\limits_{k = 0}^{\lfloor i/2 \rfloor} \frac{(-1)^k(2i-2k)!}{k!(i-k)!(i-2k)!} \, x^{i-2k}, \ \ \ i = 0,1,2,\dots,
\end{equation}
where $\lfloor \,\cdot\, \rfloor$ is the floor function. The explicit formulae can be specified separately for even indices ($i = 2j$), as
\begin{equation}\label{eqLegExpA}
  P_{2j}(x) = \frac{1}{2^{2j}} \sum\limits_{k = 0}^j \frac{(-1)^k(4j-2k)!}{k!(2j-k)!(2j-2k)!} \, x^{2j-2k},
\end{equation}
and for odd indices ($i = 2j+1$), as
\begin{equation}\label{eqLegExpB}
  P_{2j+1}(x) = \frac{1}{2^{2j+1}} \sum\limits_{k = 0}^j \frac{(-1)^k(4j-2k+2)!}{k!(2j-k+1)!(2j-2k+1)!} \, x^{2j-2k+1}.
\end{equation}

Using the normalization, we obtain the Legendre polynomials $\{\hat P_i\}_{i = 0}^\infty$ that form a complete orthonormal system of functions in $L_2(\Omega)$:
\begin{equation}\label{eqDefLeg}
  \hat P_i(x) = \sqrt{\frac{2i+1}{2}} \, P_i(x), \ \ \ i = 0,1,2,\dots
\end{equation}

The expansion of an arbitrary function $u \in L_2(\Omega)$ into the Fourier--Legendre series has the form
\begin{equation}\label{eqFourier}
  u(x) = \sum\limits_{i=0}^\infty U_i \hat P_i(x),
\end{equation}
where coefficients $U_i$ are defined as inner products, i.e.,
\[
  U_i = (u,\hat P_i)_{L_2(\Omega)} = \int_\Omega u(x) \hat P_i(x) dx, \ \ \ i = 0,1,2,\dots
\]

This expansion is a basis for the approximation of $u$:
\begin{equation}\label{eqFourierTrunc}
  u(x) \approx u^{\langle n \rangle}(x) = \sum\limits_{i=0}^n U_i \hat P_i(x),
\end{equation}
and the approximation accuracy is estimated depending on the chosen metric (or norm). For example, in $L_2(\Omega)$, we have~\cite{CanHusQuaZan_06}
\begin{equation}\label{eqEstimateL2A}
  \| u - u^{\langle n \rangle} \|_{L_2(\Omega)} \leqslant \frac{C}{n^s},
\end{equation}
where $C > 0$ does not depend on $n$, under the condition that the generalized derivatives of $u$ up to order $s$ belong to $L_2(\Omega)$. There is also the sharper estimate~\cite{Schwab_98, HouSchwabSuli_JNA02}
\begin{equation}\label{eqEstimateL2B}
  \| u - u^{\langle n \rangle} \|_{L_2(\Omega)} \leqslant \tilde C \, \biggl( \frac{(n-s+1)!}{(n+s+1)!} \biggr)^{1/2},
\end{equation}
for which it is assumed that $\tilde C > 0 $ does not depend on $n$, and the generalized derivatives of $u$ up to order $s$ belong to $L_2(\Omega)$ with weight $\rho(x) = (1-x^2)^s$, $s \leqslant n+1$.

We also present the estimate corresponding to $C(\Omega)$, the space of all continuous functions on $\Omega$. If a function $u$ is continuously differentiable $r$ times on $\Omega$, i.e., $u \in C^r(\Omega)$, and its derivative $u^{(r)}$ satisfies the Lipschitz condition with parameter $\alpha \in (0,1]$, then
\[
  | u(x) - u^{\langle n \rangle}(x) | \leqslant \frac{\hat C}{n^{r+\alpha-1/2}} \ \ \ \forall \, x \in \Omega,
\]
where $\hat C > 0$ does not depend on $n$, $r + \alpha > 1/2$~\cite{Sue_79}.

\begin{Remark}\label{remSreal}
In inequality~\eqref{eqEstimateL2A}, the parameter $s$ can be a real non-negative number~\cite{CanQua_MC82}. In this case, we should use the condition $u \in W_2^s(\Omega)$ for $s \geqslant 0$ instead of the condition that the generalized derivatives of $u$ up to order $s$ belong to $L_2(\Omega)$. If $s$ is a natural number, then $W_2^s(\Omega)$, the Sobolev space, is defined in the usual way~\cite{Tri_10}:
\[
  W_2^s(\Omega) = \bigl\{ u \colon u^{(k)} \in L_2(\Omega), ~ k = 0,1,\dots,s \bigr\}.
\]

If $s$ is not a natural number, then it is possible to define $W_2^s(\Omega)$, the Sobolev--Slobodetskij space, by the complex interpolation between $W_2^{\lfloor s \rfloor}(\Omega)$ and $W_2^{\lceil s \rceil}(\Omega)$~\cite{Tri_10, Agr_13}, where $\lceil s \rceil = \lfloor s \rfloor + 1$ and $W_2^0(\Omega) = L_2(\Omega)$. However, we use another definition for $W_2^s(\Omega)$ from~\cite{Slo_DAN58, Agr_13, EdmEva_23}:
\begin{equation}\label{eqDefWs}
  W_2^s(\Omega) = \biggl\{ u \in L_2(\Omega) \colon \int_{\Omega^2} \frac{|u(x)-u(y)|^2}{|x-y|^{1+2s}} \, dx dy < \infty \biggr\}, \ \ \ s = \sigma \in (0,1),
\end{equation}
and
\[
  W_2^s(\Omega) = \bigl\{ u \in W_2^{\lfloor s \rfloor}(\Omega), \ \ \ u^{(\lfloor s \rfloor)} \in W_2^\sigma(\Omega), \ \ \ \sigma = s-{\lfloor s \rfloor} \bigr\}, \ \ \ s \geqslant 1.
\]

The parameter $s$ in inequalities~\eqref{eqEstimateL2A} and~\eqref{eqEstimateL2B} may be related to the condition that derivatives of fractional order $s$ for $u$ belong to $L_2(\Omega)$. For non-integer $s$, we should replace the ratio $(n-s+1)!/(n+s+1)!$ with the ratio $\Gamma(n-s+2)/\Gamma(n+s+2)$ in inequality~\eqref{eqEstimateL2B}, where $\Gamma$ is the gamma function.
\end{Remark}

\subsection{Projection Estimates of Density and Distribution Functions}

Let $\xi$ be the random variable with range $\Omega = [-1,1]$, density $g$, and distribution function $f$. The function $g$ equals to zero outside $\Omega$, and the function $f$ is expressed through $g$~\cite{Cra_99}:
\begin{equation}\label{eqDistrib}
  f(x) = \int_{-\infty}^x g(\theta) d\theta.
\end{equation}

Further, we consider projection estimates of $g,f \in L_2(\Omega)$ using their expansions into the Fourier--Legendre series (the restriction of the density and distribution function to $\Omega$ is implied). Note that the condition $f \in L_2(\Omega)$ follows from the condition $g \in L_2(\Omega)$, which is assumed to be satisfied.

As a function $u$ in the above formulae, we can use the density $g$ and distribution function $f$ (below, we denote the corresponding expansion coefficients by $G_i$ and $F_i$). In this case, the sequence $f^{\langle n \rangle}$ converges to $f$ faster than the sequence $g^{\langle n \rangle}$ converges to $g$, where
\begin{equation}\label{eqGFApprox}
  g^{\langle n \rangle}(x) = \sum\limits_{i=0}^n G_i \hat P_i(x), \ \ \ f^{\langle n \rangle}(x) = \sum\limits_{i=0}^n F_i \hat P_i(x),
\end{equation}
since the degree of smoothness of $f$ is greater by one than the degree of smoothness of $g$ due to their relationship~\eqref{eqDistrib}.

For example, let $g$ be continuous on $\Omega$, differentiable in a generalized sense only, i.e., $s = 1$, $r = 0$, and satisfy the Lipschitz condition with parameter $\alpha = 1$. Then
\begin{gather*}
  \| g - g^{\langle n \rangle} \|_{L_2(\Omega)} \leqslant \frac{C^g}{n} \ \ \ \text{or} \ \ \
  \| g - g^{\langle n \rangle} \|_{L_2(\Omega)} \leqslant \frac{\tilde C^g}{\sqrt{(n+1)(n+2)}}, \\
  | g(x) - g^{\langle n \rangle}(x) | \leqslant \frac{\hat C^g}{n^{1/2}},
\end{gather*}
and for $f$ the following estimates hold ($s = 2$, $r = 1$, $\alpha = 1$):
\begin{gather*}
  \| f - f^{\langle n \rangle} \|_{L_2(\Omega)} \leqslant \frac{C^f}{n^2} \ \ \ \text{or} \ \ \
  \| f - f^{\langle n \rangle} \|_{L_2(\Omega)} \leqslant \frac{\tilde C^f}{\sqrt{n(n+1)(n+2)(n+3)}}, \\
  | f(x) - f^{\langle n \rangle}(x) | \leqslant \frac{\hat C^f}{n^{3/2}},
\end{gather*}
where $C^g,\tilde C^g,\hat C^g,C^f,\tilde C^f,\hat C^f > 0$ are some constants and $x \in \Omega$.

The projection estimates of the density and distribution function are given by expressions corresponding to their approximations~\eqref{eqGFApprox}:
\begin{equation}\label{eqGFProj}
  \bar{g}^{\langle n \rangle}(x) = \sum\limits_{i=0}^n \bar{G}_i \hat P_i(x), \ \ \ \bar{f}^{\langle n \rangle}(x) = \sum\limits_{i=0}^n \bar{F}_i \hat P_i(x),
\end{equation}
where $\bar{G}_i$ and $\bar{F}_i$ are estimates of expansion coefficients $G_i$ and $F_i$ based on observations of the random variable $\xi$~\cite{Che_72, Ibr_ZNSP13}, $i = 0,1,\dots,n$.

\subsection{Expansion Coefficients of Density and Distribution Function}

The relation between expansion coefficients $G_i$ and the statistical characteristics of the random variable $\xi$ is established using a linear functional $J$ defined by this density as
\begin{equation}\label{eqFunctional}
  J \varphi = (g,\varphi)_{L_2(\Omega)} = \int_\Omega g(x) \varphi(x) dx = \mathrm{E} \varphi(\xi),
\end{equation}
where $\mathrm{E}$ means the mathematical expectation. In formula~\eqref{eqFunctional}, it is required to substitute the Legendre polynomials~\eqref{eqDefLeg} instead of a function $\varphi$, then
\[
  G_i = \mathrm{E} \hat P_i(\xi), \ \ \ i = 0,1,2,\dots
\]

The series representation and the corresponding approximation by a partial sum of this series, similar to~\eqref{eqFourier} and~\eqref{eqFourierTrunc}, can be obtained for an arbitrary orthonormal basis of $L_2(\Omega)$. But the Legendre polynomials have an important advantage; namely, expansion coefficients $G_i$ are easily expressed via initial moments of the random variable $\xi$:
\[
  \mathrm{M}_k = \mathrm{E} \xi^k, \ \ \ k = 1,2,\dots
\]

Indeed, using the formulae~\eqref{eqLegExp} and~\eqref{eqDefLeg}, we have
\begin{equation}\label{eqPreRandomGi}
  \begin{aligned}
  \mathrm{E} \hat P_i(\xi) & = \frac{1}{2^i} \sqrt{\frac{2i+1}{2}} \sum\limits_{k = 0}^{\lfloor i/2 \rfloor} \frac{(-1)^k(2i-2k)!}{k!(i-k)!(i-2k)!} \, \mathrm{E} \xi^{i-2k} \\
  & = \frac{1}{2^i} \sqrt{\frac{2i+1}{2}} \sum\limits_{k = 0}^{\lfloor i/2 \rfloor} \frac{(-1)^k(2i-2k)!}{k!(i-k)!(i-2k)!} \, \mathrm{M}_{i-2k},
  \end{aligned}
\end{equation}
where $\mathrm{M}_0 = 1$.

Another advantage of the chosen basis is that the antiderivative of any Legendre polynomial with index $i > 0$ is represented as a linear combination of two Legendre polynomials with neighboring indices due to recurrence relation~\eqref{eqNNLegRec2} (for $i = 0$, one of the polynomials in a linear combination is also $\hat P_0$). Thus, having a projection estimate of the density, we can find a projection estimate of the distribution function by the simplest operations. The expressions required for this are obtained below.

By integrating the left-hand and right-hand sides of the relation~\eqref{eqNNLegRec2} and taking into account the property of the Legendre polynomials~\cite{Sue_79}, namely the equality $P_{i+1}(-1) = P_{i-1}(-1)$, we have
\begin{equation}\label{eqRecAux}
  (2i+1) \int_{-1}^x P_i(\theta) d\theta = P_{i+1}(x) - P_{i-1}(x), \ \ \ i = 1,2,\dots,
\end{equation}
or
\begin{align*}
  & \sqrt{\frac{2i+1}{2}} \int_{-1}^x P_i(\theta) d\theta = \frac{1}{\sqrt{2(2i+1)}} \, P_{i+1}(x) - \frac{1}{\sqrt{2(2i+1)}} \, P_{i-1}(x) \\
  & \ \ \ = \frac{1}{\sqrt{(2i+1)(2i+3)}} \, \sqrt{\frac{2i+3}{2}} \, P_{i+1}(x) - \frac{1}{\sqrt{(2i-1)(2i+1)}} \, \sqrt{\frac{2i-1}{2}} \, P_{i-1}(x),
\end{align*}
i.e.,
\begin{equation}\label{eqNNLegRec3a}
  \int_{-1}^x \hat P_i(\theta) d\theta = \frac{1}{\sqrt{(2i+1)(2i+3)}} \, \hat P_{i+1}(x) - \frac{1}{\sqrt{(2i-1)(2i+1)}} \, \hat P_{i-1}(x), \ \ \ i = 1,2,\dots
\end{equation}

Given $i = 0$, we obtain
\[
  \int_{-1}^x P_0(\theta) d\theta = \int_{-1}^x d\theta = x + 1 = P_1(x) + P_0(x),
\]
or
\[
  \frac{1}{\sqrt{2}} \int_{-1}^x P_0(\theta) d\theta = \frac{1}{\sqrt{2}} \, P_1(x) + \frac{1}{\sqrt{2}} \, P_0(x) = \frac{1}{\sqrt{3}} \, \sqrt{\frac{3}{2}} \, P_1(x) + \frac{1}{\sqrt{2}} \, P_0(x),
\]
i.e.,
\begin{equation}\label{eqNNLegRec3b}
  \int_{-1}^x \hat P_0(\theta) d\theta = \frac{1}{\sqrt{3}} \, \hat P_1(x) + \hat P_0(x).
\end{equation}

Relationship~\eqref{eqDistrib} between functions $g$ and $f$, as well as recurrent relations~\eqref{eqNNLegRec3a} and~\eqref{eqNNLegRec3b}, implies that
\begin{equation}\label{eqPreRandomFi}
  F_0 = G_0 - \frac{G_1}{\sqrt{3}}, \ \ \ F_i = \frac{G_{i-1}}{\sqrt{(2i-1)(2i+1)}} - \frac{G_{i+1}}{\sqrt{(2i+1)(2i+3)}}, \ \ \ i = 1,2,\dots
\end{equation}

\begin{Remark}\label{remSegmentAB}
The condition $\Omega = [-1,1]$ does not reduce the generality of the above reasoning. Similar relations can be derived if the range of the random variable $\xi$ is $\Omega = [a,b]$ and $g,f \in L_2(\Omega)$. In this case, the orthonormal Legendre polynomials $\{\hat P_i^*\}_{i = 0}^\infty$ are defined by the expression
\[
  \hat P_i^*(x) = \sqrt{\frac{2i+1}{b-a}} \, P_i \biggl( \frac{2x-(b+a)}{b-a} \biggr).
\]
\end{Remark}

\section{Algorithms for Jointly Obtaining Randomized Projection Estimates of Density~and~Distribution Functions}\label{secMonteCarlo}

The randomization of the projection estimate of the density $g$ is obtained by the first of the formulae~\eqref{eqGFProj} as a result of calculating the linear functional~\eqref{eqFunctional} for the Legendre polynomials~\eqref{eqDefLeg} by the Monte Carlo method with independent realizations of the random variable $\xi$:
\begin{equation}\label{eqRandomGiBase}
  \bar{G}_i = \frac{1}{N} \sum\limits_{l = 1}^N \hat P_i(\xi_l), \ \ \ i = 0,1,2,\dots,
\end{equation}
where $\xi_l$ is the $l$th realization and $N$ is the sample size (number of realizations).

Estimates $\bar{G}_i$ can also be obtained from expression~\eqref{eqPreRandomGi} as
\begin{equation}\label{eqRandomGi}
  \bar{G}_i = \frac{1}{2^i} \sqrt{\frac{2i+1}{2}} \sum\limits_{k = 0}^{\lfloor i/2 \rfloor} \frac{(-1)^k(2i-2k)!}{k!(i-k)!(i-2k)!} \, \bar{\mathrm{M}}_{i-2k},
\end{equation}
where $\bar{\mathrm{M}}_k$ are statistical estimates of initial moments of the random variable $\xi$,
\begin{equation}\label{eqMomentEstimates}
  \bar{\mathrm{M}}_k = \frac{1}{N} \sum\limits_{l = 1}^N \xi_l^k, \ \ \ k = 1,2,\dots,
\end{equation}
and $\bar{\mathrm{M}}_0 = 1$. Using the formulae~\eqref{eqLegExpA} and~\eqref{eqLegExpB}, we can write the expressions for the estimates of the density expansion coefficients separately for even indices ($i = 2j$), as
\begin{equation}\label{eqRandomGiA}
  \bar{G}_{2j} = \frac{1}{2^{2j}} \sqrt{\frac{4j+1}{2}} \sum\limits_{k = 0}^j \frac{(-1)^k(4j-2k)!}{k!(2j-k)!(2j-2k)!} \, \bar{\mathrm{M}}_{2j-2k},
\end{equation}
and for odd indices ($i = 2j+1$), as
\begin{equation}\label{eqRandomGiB}
  \bar{G}_{2j+1} = \frac{1}{2^{2j+1}} \sqrt{\frac{4j+3}{2}} \sum\limits_{k = 0}^j \frac{(-1)^k(4j-2k+2)!}{k!(2j-k+1)!(2j-2k+1)!} \, \bar{\mathrm{M}}_{2j-2k+1}.
\end{equation}

The randomization of the projection estimate of the distribution function $f$ is obtained by the second of the formulae~\eqref{eqGFProj} and recurrence relations~\eqref{eqPreRandomFi}:
\begin{equation}\label{eqRandomFi}
  \bar{F}_0 = \bar{G}_0 - \frac{\bar{G}_1}{\sqrt{3}}, \ \ \ \bar{F}_i = \frac{\bar{G}_{i-1}}{\sqrt{(2i-1)(2i+1)}} - \frac{\bar{G}_{i+1}}{\sqrt{(2i+1)(2i+3)}}, \ \ \ i = 1,2,\dots
\end{equation}

Here and below, the dependence of the estimates of both expansion coefficients and functions on the sample size $N$ is not indicated for simplicity.

\begin{Remark}\label{remCoefGF}
To obtain a projection estimate of the distribution function $f$ based on the first $n+1$ Legendre polynomials, it is necessary to find estimates of expansion coefficients of the density $g$ with respect to the first $n+2$ Legendre polynomials.
\end{Remark}

Further, we present the first algorithm for jointly obtaining randomized projection estimates of the density and distribution function of the random variable $\xi$.

\begin{Algorithm}[Jointly obtaining projection estimates of the density and distribution function using explicit formulae for the Legendre polynomials]\label{algProjEstimate1}

~

0.\;Set the projection expansion length $n$ and the sample size $N$.

1.\;Simulate $N$ realizations $\xi_l$ of the random variable $\xi$, $l = 1,2,\dots,N$.

2.\;Find statistical estimates $\bar{\mathrm{M}}_k$ of initial moments of the random variable $\xi$ using the formula~\eqref{eqMomentEstimates}, $k = 1,2,\dots,n+1$. Set $\bar{\mathrm{M}}_0 = 1$.

3.\;Find estimates of expansion coefficients $\bar{G}_i$ of the density $g$ using the formula~\eqref{eqRandomGi} or the formulae~\eqref{eqRandomGiA} and~\eqref{eqRandomGiB}, $i = 0,1,\dots,n+1$.

4.\;Find estimates of expansion coefficients $\bar{F}_i$ of the distribution function $f$ using the formulae~\eqref{eqRandomFi}, $i = 0,1,\dots,n$.

5.\;Find randomized projection estimates $\bar{g}^{\langle n \rangle}$ and $\bar{f}^{\langle n \rangle}$ of the density and distribution function, respectively, using the formulae~\eqref{eqGFProj}.
\end{Algorithm}

In step 3 in Algorithm~\ref{algProjEstimate1}, errors can occur due to the peculiarities of machine arithmetic when calculating expansion coefficients $\bar{G}_i$. To avoid this, it is recommended to use the formula~\eqref{eqRandomGiBase} together with the recurrence relation~\eqref{eqNNLegRec1} and the expression~\eqref{eqDefLeg}.

Next, we formulate the second algorithm for jointly obtaining randomized projection estimates of the density and distribution function.

\begin{Algorithm}[Jointly obtaining projection estimates of the density and distribution function using the recurrence relation for the Legendre polynomials]\label{algProjEstimate2}

~

0.\;Set the projection expansion length $n$ and the sample size $N$.

1.\;Simulate $N$ realizations $\xi_l$ of the random variable $\xi$, $l = 1,2,\dots,N$.

2.\;Find estimates of expansion coefficients $\bar{G}_i$ of the density $g$ using the formula~\eqref{eqRandomGiBase} together with the recurrence relation~\eqref{eqNNLegRec1} and the expression~\eqref{eqDefLeg}, $i = 0,1,\dots,n+1$.

3.\;Find estimates of expansion coefficients $\bar{F}_i$ of the distribution function $f$ using the formulae~\eqref{eqRandomFi}, $i = 0,1,\dots,n$.

4.\;Find randomized projection estimates $\bar{g}^{\langle n \rangle}$ and $\bar{f}^{\langle n \rangle}$ of the density and distribution function, respectively, using the formulae~\eqref{eqGFProj}.
\end{Algorithm}

\section{Analysis and Conditional Optimization of Randomized Projection Estimates}\label{secOptim}

In this section, we analyze the error of the projection estimate $\bar{g}^{\langle n \rangle}$ relative to the density $g$ of the random variable $\xi$ in $L_2(\Omega)$. Let $B(g,\bar{g}^{\langle n \rangle}) = \mathrm{E} \| g - \bar{g}^{\langle n \rangle} \|_{L_2(\Omega)}$; then,
\[
  B^2(g,\bar{g}^{\langle n \rangle}) = \bigl( \mathrm{E} \| g - \bar{g}^{\langle n \rangle} \|_{L_2(\Omega)} \bigr)^2 \leqslant \mathrm{E} \int_\Omega \bigl( g(x) - \bar{g}^{\langle n \rangle}(x) \bigr)^2 dx
\]
due to Jensen's inequality~\cite{GihSco_77}. Further, we consider the following expression:
\[
  \int_\Omega \bigl( g(x) - \bar{g}^{\langle n \rangle}(x) \bigr)^2 dx = \| g - \bar{g}^{\langle n \rangle} \|_{L_2(\Omega)}^2 = \| g - g^{\langle n \rangle} + g^{\langle n \rangle} - \bar{g}^{\langle n \rangle} \|_{L_2(\Omega)}^2.
\]

Functions $g - g^{\langle n \rangle}$ and $g^{\langle n \rangle} - \bar{g}^{\langle n \rangle}$ are orthogonal in $L_2(\Omega)$ since they belong to different subspaces. The first one is formed by the Legendre polynomials $\hat P_i$ for $i = n+1,n+2,\dots$, and the second one is formed by the Legendre polynomials $\hat P_i$ for $i = 0,1,\dots,n$. This is a consequence of the equalities
\[
  g(x) = \sum\limits_{i=0}^\infty G_i \hat P_i(x), \ \ \ g^{\langle n \rangle}(x) = \sum\limits_{i=0}^n G_i \hat P_i(x), \ \ \ \bar{g}^{\langle n \rangle}(x) = \sum\limits_{i=0}^n \bar{G}_i \hat P_i(x);
\]
hence,
\[
  \| g - g^{\langle n \rangle} + g^{\langle n \rangle} - \bar{g}^{\langle n \rangle} \|_{L_2(\Omega)}^2 = \| g - g^{\langle n \rangle} \|_{L_2(\Omega)}^2 + \| g^{\langle n \rangle} - \bar{g}^{\langle n \rangle} \|_{L_2(\Omega)}^2.
\]

According to Parseval's identity~\cite{Sue_79}, we have
\[
  \| g^{\langle n \rangle} - \bar{g}^{\langle n \rangle} \|_{L_2(\Omega)}^2 = \biggl\| \sum\limits_{i=0}^n (G_i - \bar{G}_i) \hat P_i(x) \biggl\|_{L_2(\Omega)}^2 = \sum\limits_{i=0}^n (G_i - \bar{G}_i)^2
\]
and
\[
  \int_\Omega \bigl( g(x) - \bar{g}^{\langle n \rangle}(x) \bigr)^2 dx = \sum\limits_{i=0}^n (G_i - \bar{G}_i)^2 + \| g - g^{\langle n \rangle} \|_{L_2(\Omega)}^2.
\]

Therefore, taking into account that the unbiased estimate of mathematical expectation is used~\cite{Cra_99}, namely $\mathrm{E} \bar{G}_i = G_i$, we obtain
\begin{equation}\label{eqErrorG0}
  \begin{aligned}
    B^2(g,\bar{g}^{\langle n \rangle}) & \leqslant \mathrm{E} \biggl( \sum\limits_{i=0}^n (G_i - \bar{G}_i)^2 + \| g - g^{\langle n \rangle} \|_{L_2(\Omega)}^2 \biggr) \\
    & = \sum\limits_{i=0}^n \mathrm{E} (G_i - \bar{G}_i)^2 + \| g - g^{\langle n \rangle} \|_{L_2(\Omega)}^2 \\
    & = \sum\limits_{i=0}^n \mathrm{D} \bar{G}_i + \| g - g^{\langle n \rangle} \|_{L_2(\Omega)}^2,
  \end{aligned}
\end{equation}
where $\mathrm{D}$ denotes the variance.

The equality $\bar{G}_0 = G_0$ is satisfied since $\hat P_0(x) = \mathrm{const}$, i.e., $\mathrm{D} \bar{G}_0 = 0$. For $i > 0$, the variance of estimates $\bar{G}_i$ and the sample size $N$ are inversely proportional~\cite{MihVoi_06}; therefore, using additionally the inequality~\eqref{eqEstimateL2A} under the condition $g \in W_2^s(\Omega)$ (see Remark~\ref{remSreal}), we find
\begin{equation}\label{eqErrorG}
  B^2(g,\bar{g}^{\langle n \rangle}) \leqslant \frac{C_1^g n}{N} + \frac{C_2^g}{n^{2s}},
\end{equation}
where $C_1^g,C_2^g > 0$ are constants independent of $n$ and $N$.

The error of the projection estimate $\bar{f}^{\langle n \rangle}$ relative to the distribution function $f$ of the random variable $\xi$ in $L_2(\Omega)$ is analyzed similarly:
\begin{equation}\label{eqErrorF}
  B^2(f,\bar{f}^{\langle n \rangle}) \leqslant \frac{C_1^f n}{N} + \frac{C_2^f}{n^{2s+2}},
\end{equation}
where $C_1^f,C_2^f > 0$ are constants independent of $n$ and $N$.

From the obtained estimates, it is clear how the mean square error depends on the projection expansion length $n$ and the sample size $N$.

Further, we consider a problem of the optimal (consistent) choice of parameters for statistical algorithms to obtain projection estimates of the density and distribution function: the projection expansion length $n$ and the sample size $N$. For jointly obtaining these estimates by Algorithms~\ref{algProjEstimate1} and~\ref{algProjEstimate2}, it is sufficient to consider the optimal choice of parameters for density estimation only and use them for distribution function estimation. This is because the degree of smoothness of $f$ is greater than the degree of smoothness of $g$, so the sequence $\bar{f}^{\langle n \rangle}$ converges to $f$ faster than the sequence $\bar{g}^{\langle n \rangle}$ converges to $g$.

The main result is stated using the symbol ``$\asymp$''. The expression $u(n) \asymp v(n)$ for suitable functions $u$ and $v$ means that $u(n) = O(v(n))$ and $v(n) = O(u(n))$ for $n \to \infty$; i.e., there exist constants $C_*,C^* > 0$ such that $C_* |v(n)| \leqslant |u(n)| \leqslant C^* |v(n)|$ $\forall \, n > n^*$, where $n,n^*$ are natural numbers.

\begin{Theorem}\label{thmMain}
Let the density $g$ of the random variable $\xi$ belong to $W_2^s(\Omega)$, and let $\bar{g}^{\langle n \rangle}$ be the randomized projection estimate of the density obtained by Algorithms~\ref{algProjEstimate1} or~\ref{algProjEstimate2} with the projection expansion length $n$ and the sample size $N$. Then the minimum complexity of obtaining the estimate $\bar{g}^{\langle n \rangle}$ is achieved with parameters $n = n_\mathrm{opt}$ and $N = N_\mathrm{opt}$ that satisfy relations
\[
  N_\mathrm{opt} \asymp n_\mathrm{opt}^{2s+1}, \ \ \ N_\mathrm{opt} \asymp \gamma^{-(2s+1)/s}, \ \ \ n_\mathrm{opt} \asymp \gamma^{-1/s},
\]
where $\gamma$ is the required approximation accuracy for the density $g$ in $L_2(\Omega)$, i.e., $B(g,\bar{g}^{\langle n \rangle}) \leqslant \gamma$.
\end{Theorem}

\begin{proof}
To find the optimal parameters $n_\mathrm{opt}$ and $N_\mathrm{opt}$ for the estimate $\bar{g}^{\langle n \rangle}$, it is sufficient to equate terms~\cite{Mih_00} in the formula~\eqref{eqErrorG} and express the required approximation accuracy $\gamma$ from the relation
\[
  \frac{C_1^g n}{N} + \frac{C_2^g}{n^{2s}} = \gamma^2.
\]

From the equality
\[
  \frac{C_1^g n}{N} = \frac{C_2^g}{n^{2s}},
\]
we obtain the relationship for the optimal parameters, i.e., $N_\mathrm{opt} \asymp n_\mathrm{opt}^{2s+1}$, as well as expressions
\[
  \frac{C_1^g n}{N} = \frac{\gamma^2}{2} \ \ \ \text{and} \ \ \ \frac{C_2^g}{n^{2s}} = \frac{\gamma^2}{2},
\]
and consequently, $N_\mathrm{opt} \asymp \gamma^{-(2s+1)/s}$ and $n_\mathrm{opt} \asymp \gamma^{-1/s}$.
\end{proof}

Theorem~\ref{thmMain} establishes the relationship between parameters in Algorithms~\ref{algProjEstimate1} and~\ref{algProjEstimate2}: $n$~and $N$, as well as the dependence of approximation accuracy on parameter $s$. By choosing parameters in this way, we have
\[
  B(g,\bar{g}^{\langle n \rangle}) \leqslant c N^{-s/(2s+1)},
\]
where $c > 0$ is a constant independent of $n$ and $N$.

\begin{Remark}\label{remForThmMain1}
The error in~\eqref{eqErrorG} of the randomized projection estimate $\bar{g}^{\langle n \rangle}$ relative to the density $g$ is based on the inequality~\eqref{eqEstimateL2A}, but the inequality~\eqref{eqEstimateL2B} can also be used. Then, taking into account Remark~\ref{remSreal}, we have
\[
  B^2(g,\bar{g}^{\langle n \rangle}) \leqslant \frac{C_1^g n}{N} + C_2^g \, \frac{\Gamma(n-s+2)}{\Gamma(n+s+2)}.
\]

It is possible to formulate an analogue of Theorem~\ref{thmMain} and show that the following relationship between parameters for estimating the density $g$ is conditionally optimal:
\[
  N_\mathrm{opt} \asymp n_\mathrm{opt} \, \frac{\Gamma(n_\mathrm{opt}+s+2)}{\Gamma(n_\mathrm{opt}-s+2)}.
\]
\end{Remark}

\begin{Remark}\label{remForThmMain2}
If the distribution function $f$ of the random variable $\xi$ belongs to $W_2^{s+1}(\Omega)$ (this condition holds if the density $g$ of the random variable $\xi$ belongs to $W_2^s(\Omega)$, i.e., if conditions of Theorem~\ref{thmMain} are satisfied), then the minimum complexity of obtaining randomized projection estimate $\bar{f}^{\langle n \rangle}$ by Algorithms~\ref{algProjEstimate1} or~\ref{algProjEstimate2} is achieved with parameters $n = n_\mathrm{opt}$ and $N = N_\mathrm{opt}$ that satisfy the relations
\[
  N_\mathrm{opt} \asymp n_\mathrm{opt}^{2s+3}, \ \ \ N_\mathrm{opt} \asymp \gamma^{-(2s+3)/(s+1)}, \ \ \ n_\mathrm{opt} \asymp \gamma^{-1/(s+1)},
\]
where $\gamma$ is the required approximation accuracy for the function $f$ in $L_2(\Omega)$, i.e., $B(f,\bar{f}^{\langle n \rangle})  \leqslant \gamma$.

The proof is similar to the proof of Theorem~\ref{thmMain}. Another relationship between parameters can be found if we take the inequality~\eqref{eqEstimateL2B} instead of the inequality~\eqref{eqEstimateL2A} (see Remark~\ref{remForThmMain1}).
\end{Remark}

\section{Two-parameter Family of Densities with Different Degrees of Smoothness}\label{secTest}

In this section, a special example is proposed to test statistical algorithms for obtaining projection estimates depending on the projection expansion length, sample size, and smoothness of the estimated function.

\subsection{Densities with Different Degrees of Smoothness and Related Distribution Functions}

Let $\xi$ be the random variable defined by the density $g_{\nu_1,\nu_2}$:
\begin{equation}\label{eqDensn1n2}
  g_{\nu_1,\nu_2}(x) = \gamma_{\nu_1,\nu_2} \left\{ \begin{array}{ll}
    1-(-x)^{\nu_1} & \text{for} ~ x \in [-1,0), \\
    1-x^{\nu_2} & \text{for} ~ x \in [0,1], \\
    0 & \text{otherwise},
  \end{array} \right.
\end{equation}
where $\nu_1$ and $\nu_2$ are parameters (natural numbers), and $\gamma_{\nu_1,\nu_2}$ is a normalizing constant, with
\[
  \frac{1}{\gamma_{\nu_1,\nu_2}} = \int_{-1}^0 \bigl( 1-(-x)^{\nu_1} \bigr) dx + \int_0^1 (1-x^{\nu_2}) dx = \frac{\nu_1}{\nu_1+1} + \frac{\nu_2}{\nu_2+1},
\]
because
\[
  \int_{-1}^0 \bigl( 1-(-x)^\nu \bigr) dx = \int_0^1 (1-x^\nu) dx = \biggl( x - \frac{x^{\nu+1}}{\nu+1} \biggr) \bigg|_0^1 = 1 - \frac{1}{\nu+1} = \frac{\nu}{\nu+1}.
\]

Further, we assume that $\nu_1 \neq \nu_2$. The function $g_{\nu_1,\nu_2}$ has the following properties:

(a)\;$g_{\nu_1,\nu_2}$ is continuous on the set of real numbers;

(b)\;the support of $g_{\nu_1,\nu_2}$ is the set $\Omega = [-1,1]$;

(c)\;the normalization condition holds:
\[
  \int_{-\infty}^{+\infty} g_{\nu_1,\nu_2}(x) dx = \int_{-1}^1 g_{\nu_1,\nu_2}(x) dx = 1;
\]

(d)\;$g_{\nu_1,\nu_2}$ is differentiable at $x = 0$ only $r$ times:
\[
  g_{\nu_1,\nu_2} \in C^r(\Omega), \ \ \ r = \min\{\nu_1,\nu_2\} - 1;
\]

(e)\;the derivative $g_{\nu_1,\nu_2}^{(r+1)}$ exists almost everywhere on $\Omega$; if the derivative is understood in a generalized sense, then
\[
  g_{\nu_1,\nu_2}^{(s)} \in L_2(\Omega), \ \ \ g_{\nu_1,\nu_2} \in W_2^s(\Omega), \ \ \ s = r+1 = \min\{\nu_1,\nu_2\}.
\]

Next, we determine the distribution function $f_{\nu_1,\nu_2}$ for the random variable $\xi$ with density~\eqref{eqDensn1n2}:
\begin{equation}\label{eqDistribEx}
  f_{\nu_1,\nu_2}(x) = \int_{-\infty}^x g_{\nu_1,\nu_2}(\theta) d\theta.
\end{equation}

It is easy to see that $f_{\nu_1,\nu_2}(x) = 0$ for $x < -1$ and $f_{\nu_1,\nu_2}(x) = 1$ for $x > 1$. Moreover,
\begin{gather*}
  \int_{-1}^x \bigl( 1-(-\theta)^\nu \bigr) d\theta = \biggl( \theta + (-1)^{\nu+1} \frac{\theta^{\nu+1}}{\nu+1} \biggr) \bigg|_{-1}^x = \frac{\nu}{\nu+1} + x + \frac{(-x)^{\nu+1}}{\nu+1}, \ \ \ x \in [-1,0], \\
  \int_0^x (1-\theta^\nu) d\theta = \biggl( \theta - \frac{\theta^{\nu+1}}{\nu+1} \biggr) \bigg|_0^x = x - \frac{x^{\nu+1}}{\nu+1}, \ \ \ x \in [0,1],
\end{gather*}
and consequently,
\begin{equation}\label{eqDistribn1n2}
  f_{\nu_1,\nu_2}(x) = \left\{ \begin{array}{ll}
    \displaystyle 0\vphantom{\frac12} & \text{for} ~ x < -1, \\
    \displaystyle \gamma_{\nu_1,\nu_2} \biggl( \frac{\nu_1}{\nu_1+1} + x + \frac{(-x)^{\nu_1+1}}{\nu_1+1} \biggr) & \text{for} ~ x \in [-1,0), \\
    \displaystyle \gamma_{\nu_1,\nu_2} \biggl( \frac{\nu_1}{\nu_1+1} + x - \frac{x^{\nu_2+1}}{\nu_2+1} \biggr) & \text{for} ~ x \in [0,1], \\
    \displaystyle 1\vphantom{\frac12} & \text{for} ~ x > 1.
  \end{array} \right.
\end{equation}

The function $f_{\nu_1,\nu_2}$ is differentiable $r+1$ times at $x = 0$ since $f_{\nu_1,\nu_2}^{(r+1)} = g_{\nu_1,\nu_2}^{(r)}$ due to the relationship~\eqref{eqDistribEx}. Thus,
\[
  f_{\nu_1,\nu_2} \in C^{r+1}(\Omega), \ \ \ f_{\nu_1,\nu_2}^{(s+1)} \in L_2(\Omega), \ \ \ f_{\nu_1,\nu_2} \in W_2^{s+1}(\Omega).
\]

If we do not restrict ourselves to the space $W_2^s(\Omega)$ with natural $s$ (see Remark~\ref{remSreal}), then we can show that $s < \min\{\nu_1,\nu_2\} + 1/2$ and $\sup s = \min\{\nu_1,\nu_2\} + 1/2$.

Consider the case $\nu_1 > \nu_2$. Then the generalized derivative $g_{\nu_1,\nu_2}^{(\nu_2)}$ is represented as a linear combination of functions $\eta(x)$ and $x^{\nu_1 - \nu_2} \eta(-x)$, where $\eta$ is the indicator  of $(0,\infty)$. It suffices to find a condition for parameter $\sigma$ which ensures that $\eta \in W_2^\sigma(\Omega)$, where $W_2^\sigma(\Omega)$ is defined by the formula~\eqref{eqDefWs} and $\sigma \in (0,1)$.

If $x,y$ have the same sign, then $|\eta(x) - \eta(y)| = 0$, and if $x$ and $y$ have different signs, then $|\eta(x) - \eta(y)| = 1$. Hence,
\[
  \int_{\Omega^2} \frac{|\eta(x)-\eta(y)|^2}{|x-y|^{1+2\sigma}} \, dx dy = \int_{-1}^0 \biggl[ \int_0^1 \frac{dy}{|x-y|^{1+2\sigma}} \biggr] dx + \int_0^1 \biggl[ \int_{-1}^0 \frac{dy}{|x-y|^{1+2\sigma}} \biggr] dx.
\]

The integrals on the right-hand side of the latter equality coincide since the integrand does not change when the signs of $x$ and $y$ change simultaneously. The convergence condition for them is $\sigma < 1/2$. Indeed, for $\sigma < 1/2$, we have
\begin{align*}
  & \int_{-1}^0 \biggl[ \int_0^1 \frac{dy}{|x-y|^{1+2\sigma}} \biggr] dx = \int_{-1}^0 \biggl[ \int_0^1 \frac{dy}{(y-x)^{1+2\sigma}} \biggr] dx = -\frac{1}{2\sigma} \int_{-1}^0 \frac{1}{(y-x)^{2\sigma}} \bigg|_0^1 dx \\
  & \ \ \ = -\frac{1}{2\sigma} \int_{-1}^0 \biggl[ \frac{1}{(1-x)^{2\sigma}} - \frac{1}{(-x)^{2\sigma}} \biggr] dx = \frac{1}{2\sigma(2\sigma - 1)} \biggl[ -\frac{1}{(1-x)^{2\sigma-1}} \bigg|_{-1}^0 + \frac{1}{(-x)^{2\sigma-1}} \bigg|_{-1}^0 \biggr] \\
  & \ \ \ = \frac{1}{2\sigma(2\sigma-1)} \biggl[ \frac{1}{2^{2\sigma-1}} - 2 \biggr] = \frac{2^{-2\sigma} - 1}{\sigma(2\sigma-1)}
\end{align*}
and
\[
  \int_{\Omega^2} \frac{|\eta(x)-\eta(y)|^2}{|x-y|^{1+2\sigma}} \, dx dy = \frac{2(2^{-2\sigma} - 1)}{\sigma(2\sigma-1)}.
\]

For $\sigma \geqslant 1/2$, these integrals obviously diverge. Therefore, $g_{\nu_1,\nu_2}^{(\nu_2)} \in W_2^\sigma(\Omega)$ for $\sigma < 1/2$ and $g_{\nu_1,\nu_2} \in W_2^s(\Omega)$ for $s < \nu_2+1/2$.

The case $\nu_1 < \nu_2$ is similar to the considered case, so finally $g_{\nu_1,\nu_2} \in W_2^s(\Omega)$ and $f_{\nu_1,\nu_2} \in W_2^{s+1}(\Omega)$ provided $s < \min\{\nu_1,\nu_2\} + 1/2$ and $\sup s = \min\{\nu_1,\nu_2\} + 1/2$.

\subsection{Modeling Random Variables with Given Test Distributions Using Monte Carlo Method}

The modeling formula for the random variable $\xi$ with distribution function $f_{\nu_1,\nu_2}$ for parameters $\nu_1$ and $\nu_2$ can be derived using the inverse function method~\cite{MihVoi_06}: $\xi = f_{\nu_1,\nu_2}^{-1}(\alpha)$, where $f_{\nu_1,\nu_2}^{-1}$ is the inverse function to $f_{\nu_1,\nu_2}$, and $\alpha$ is the random variable having a uniform distribution on $(0,1)$.

Given the distribution function $f_{\nu_1,\nu_2}$, we can obtain the algorithm for modeling the random variable $\xi$.

\begin{Algorithm}[Modeling the random variable with given test density and distribution function]\label{algInverseFunc}

~

0.\;Set parameters $\nu_1$, $\nu_2$, calculate $\gamma_{\nu_1,\nu_2}$ and $f_{\nu_1,\nu_2}(0)$:
\[
 \gamma_{\nu_1,\nu_2}= \biggl( \frac{\nu_1}{\nu_1+1} + \frac{\nu_2}{\nu_2+1} \biggr)^{-1}, \ \ \
  f_{\nu_1,\nu_2}(0) = \gamma_{\nu_1,\nu_2} \, \frac{\nu_1}{\nu_1+1} =  \frac{\nu_1 (\nu_2+1)}{\nu_1 + 2 \nu_1 \nu_2 + \nu_2}.
\]

1.\;Obtain a realization of the random variable $\alpha$ having a uniform distribution on $(0,1)$.

2.\;If $\alpha < f_{\nu_1,\nu_2}(0)$, then $\xi$ is a root of the algebraic equation
\begin{equation}\label{eqXiModeling1}
  \frac{\nu_1}{\nu_1+1} + x + \frac{(-x)^{\nu_1+1}}{\nu_1+1} - \frac{\alpha}{\gamma_{\nu_1,\nu_2}} = 0
\end{equation}
from $(-1,0)$, otherwise $\xi$ is a root of the algebraic equation
\begin{equation}\label{eqXiModeling2}
  \frac{\nu_1}{\nu_1+1} + x - \frac{x^{\nu_2+1}}{\nu_2+1} - \frac{\alpha}{\gamma_{\nu_1,\nu_2}} = 0
\end{equation}
from $[0,1)$.
\end{Algorithm}

For small $\nu_1$ and $\nu_2$, roots can be found analytically. Next, we obtain the modeling formulae for two cases used below.

\begin{Proposition}\label{exXiModeling1}
For the random variable $\xi $ with density
\[
  g(x) = \frac{6}{7} \left\{ \begin{array}{ll}
    1+x & \text{for} ~ x \in [-1,0), \\
    1-x^{2} & \text{for} ~ x \in [0,1], \\
    0 & \text{otherwise},
  \end{array} \right.
\]
the modeling formula is as follows:
\begin{equation}\label{eqXiModelingEx1}
  \xi = \left\{ \begin{array}{ll}
    \displaystyle \sqrt{\frac{7\alpha}{3}} - 1 & \text{for} ~ \displaystyle \alpha < \frac{3}{7}, \\ [-2ex] \\
    \displaystyle -\mathop{\mathrm{Re}} A + \sqrt{3} \, \mathop{\mathrm{Im}} A & \text{for} ~ \displaystyle \alpha \geqslant \frac{3}{7},
  \end{array} \right.
\end{equation}
where
\[
  A = \sqrt[3]{\frac{3-7\alpha}{4} + \sqrt{ \frac{(7\alpha-3)^2}{16} - 1}}.
\]
\end{Proposition}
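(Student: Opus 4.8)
The plan is to specialize Algorithm~\ref{algInverseFunc} to the parameters $\nu_1 = 1$, $\nu_2 = 2$ and to solve the two resulting algebraic equations in closed form. First I would record that for these parameters $\gamma_{\nu_1,\nu_2} = \bigl(\frac12 + \frac23\bigr)^{-1} = \frac67$ and $f_{\nu_1,\nu_2}(0) = \frac67\cdot\frac12 = \frac37$, so the density in the statement is exactly $g_{1,2}$ from~\eqref{eqDensn1n2} and the case split $\alpha < 3/7$ versus $\alpha \geqslant 3/7$ is precisely the one prescribed by step~2. For $\alpha < 3/7$, equation~\eqref{eqXiModeling1} with $\nu_1 = 1$ reduces to $\frac12 + x + \frac{x^2}{2} - \frac{7\alpha}{6} = 0$, i.e.\ $(x+1)^2 = 7\alpha/3$; since the required root lies in $(-1,0)$ one takes $x + 1 = +\sqrt{7\alpha/3}$, which yields the first line of~\eqref{eqXiModelingEx1} (and at $\alpha = 3/7$ this degenerates to $\xi = 0$, matching the other branch).

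For $\alpha \geqslant 3/7$, equation~\eqref{eqXiModeling2} with $\nu_1 = 1$, $\nu_2 = 2$ becomes the depressed cubic $x^3 - 3x + \frac{7\alpha - 3}{2} = 0$, so $p = -3$ and $q = \frac{7\alpha-3}{2}$, and the Cardano quantity is $\frac{q^2}{4} + \frac{p^3}{27} = \frac{(7\alpha-3)^2}{16} - 1 \in [-1,0)$ for $\alpha \in [3/7,1)$. This is the casus irreducibilis: all three roots are real while the cube roots appearing in Cardano's formula are genuinely complex. I would then note that $\bigl(\frac{3-7\alpha}{4}\bigr)^2 + \bigl(1 - \frac{(7\alpha-3)^2}{16}\bigr) = 1$, so the radicand $\frac{3-7\alpha}{4} + \sqrt{\frac{(7\alpha-3)^2}{16} - 1}$ lies on the unit circle; hence $A = e^{i\phi/3}$ with $\phi = \arccos\frac{3-7\alpha}{4}$ and $|A| = 1$, and the three real roots of the cubic are $2\cos\bigl(\frac\phi3 - \frac{2\pi k}{3}\bigr)$, $k = 0,1,2$. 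The addition formula turns the $k = 1$ root into $2\cos\frac\phi3\cos\frac{2\pi}{3} + 2\sin\frac\phi3\sin\frac{2\pi}{3} = -\cos\frac\phi3 + \sqrt{3}\sin\frac\phi3 = -\mathop{\mathrm{Re}} A + \sqrt{3}\,\mathop{\mathrm{Im}} A$, which is the second line of~\eqref{eqXiModelingEx1}.

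It remains to verify that this root is the one selected by the algorithm, namely the one in $[0,1)$. For $\alpha \in [3/7,1)$ one has $\frac{3-7\alpha}{4} \in (-1,0]$, hence $\phi \in [\pi/2,\pi)$, $\frac\phi3 - \frac{2\pi}{3} \in [-\pi/2, -\pi/3)$ and thus $2\cos\bigl(\frac\phi3 - \frac{2\pi}{3}\bigr) \in [0,1)$, whereas the $k = 0$ and $k = 2$ roots lie in $(1,\sqrt{3}\,]$ and $(-2,-\sqrt{3}\,]$ respectively and are discarded. The step I expect to be the main obstacle is the bookkeeping of the casus irreducibilis: one must check that the Cardano radicand really lies on the unit circle so that $A$ is well defined with $|A| = 1$, pick the branch of the cube root that reproduces the real root lying in $[0,1)$, and confirm that no admissible root is overlooked; by contrast, the branch $\alpha < 3/7$ and the matching of the two formulas at $\alpha = 3/7$ are routine.
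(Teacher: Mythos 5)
Your proposal is correct and follows essentially the same route as the paper's proof: specialize Algorithm~\ref{algInverseFunc} to $\nu_1=1$, $\nu_2=2$, solve the quadratic on $(-1,0)$ directly, and resolve the cubic via Cardano's formula in the casus irreducibilis, arriving at the same root $-\mathop{\mathrm{Re}} A + \sqrt{3}\,\mathop{\mathrm{Im}} A$. The only (cosmetic) difference is in selecting the correct cubic root: you observe that the Cardano radicand has unit modulus and bound the angles $\phi/3 - 2\pi k/3$ explicitly, whereas the paper locates the three real roots by a sign analysis of $w(x)=x^3-3x+(7\alpha-3)/2$ at its critical points $x=\pm1$ and then orders them via $\arg A \in (\pi/6,\pi/3)$ --- both arguments rest on the same fact that $\arg z \in [\pi/2,\pi)$.
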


\begin{proof}
The density $g$ of the random variable $\xi$ is included in a two-parameter family~\eqref{eqDensn1n2} for $\nu_1 = \nu$, $\nu_2 = \nu+1$, and $\nu = 1$: $g = g_{1,2}$. For given parameters, we have
\[
  \frac{\nu_1}{\nu_1+1} = \frac{1}{2}, \ \ \ \gamma_{1,2} = \frac{6}{7}, \ \ \ f_{1,2}(0) = \frac{3}{7}.
\]

First, we consider the case $\alpha < f_{1,2}(0) = 3/7$; i.e., we should solve the equation~\eqref{eqXiModeling1}. This is the quadratic equation
\[
  \frac{1}{2} + x + \frac{x^2}{2} - \frac{7\alpha}{6} = 0, \ \ \ \text{or} \ \ \ x^2 + bx + c = 0,
\]
where $b = 2$, $c = 1 - 7\alpha/3 > 0$.

The function $v(x) = x^2 + 2x + 1 - 7\alpha/3$ has a minimum at $x^* = -1$ since $v'(x^*) = 2x^* + 2 = 0$, $v''(x^*) = 2 > 0$, and $v(x^*) = -7\alpha/3 < 0$. This means that the quadratic equation has two real roots (the discriminant is positive) and the largest of them determines~$\xi$:
\[
  \xi = \max\{x_1,x_2\} > x^*, \ \ \ x_{1,2} = -\frac{b}{2} \pm \sqrt{ \frac{b^2}{4} - c } = -1 \pm \sqrt{\frac{7\alpha}{3}}, \ \ \ \xi = \sqrt{\frac{7\alpha}{3}} - 1 \in (-1,0).
\]

Second, we consider the case $\alpha \geqslant f_{1,2}(0) = 3/7$; i.e., we should solve the equation~\eqref{eqXiModeling2}. This is the cubic equation
\[
  \frac{1}{2} + x - \frac{x^3}{3} - \frac{7\alpha}{6} = 0, \ \ \ \text{or} \ \ \ x^3 + px + q = 0,
\]
where $p = -3$, $q = (7\alpha-3)/2 > 0$.

The function $w(x) = x^3 - 3x + (7\alpha-3)/2$ has extrema at $x_{1,2}^* = \pm 1$: $w'(x^*) = 3(x_{1,2}^*)^2 - 3 = 0$. A minimum is reached at $x_1^* = 1$: $w''(x_1^*) = 6 x_1^* = 6 > 0$ and $w(x_1^*) = 7(\alpha-1)/2 < 0$. A maximum is reached at $x_2^* = -1$: $w''(x_2^*) = 6 x_2^* = -6 < 0$ and $w(x_2^*) = (1 + 7\alpha)/2 > 0$. This means that the cubic equation has three real roots (the discriminant is positive), and $\xi$ is determined by the root that lies between the smallest and the largest roots. By using Cardano's formulae for roots~\cite{KornKorn_00}, we have
\[
  x_1 = A+B, \ \ \ x_{2,3} = -\frac{A+B}{2} \pm \frac{A-B}{2} \, \sqrt{3} \, \mathrm{i},
\]
where
\[
  A = \sqrt[3]{-\frac{q}{2} + \sqrt{Q}}, \ \ \ B = \sqrt[3]{-\frac{q}{2} - \sqrt{Q}}, \ \ \
  Q = \biggl( \frac{p}{3} \biggr)^3 + \biggl( \frac{q}{2} \biggr)^2,
\]
and $Q < 0$ which corresponds to the positive discriminant; therefore, $A$ and $B$ are complex conjugate numbers:
\[
  \frac{A+B}{2} = \mathop{\mathrm{Re}} A, \ \ \ \frac{A-B}{2 \, \mathrm{i}} = \mathop{\mathrm{Im}} A
  \ \ \ \text{and} \ \ \
  x_1 = 2 \mathop{\mathrm{Re}} A, \ \ \ x_{2,3} = -\mathop{\mathrm{Re}} A \mp \sqrt{3} \, \mathop{\mathrm{Im}} A.
\]

Let $z = -q/2 + \sqrt{Q}$. Since $\mathop{\mathrm{Re}} z = -q/2 < 0$ and $\mathop{\mathrm{Im}} z = \sqrt{-Q} > 0$, we obtain $\arg z \in (\pi/2,\pi)$; therefore, $\arg A \in (\pi/6,\pi/3)$ and
\[
  \mathop{\mathrm{Re}} A, \mathop{\mathrm{Im}} A > 0, \ \ \ \tan \frac{\pi}{6} < \frac{\mathop{\mathrm{Im}} A}{\mathop{\mathrm{Re}} A} < \tan \frac{\pi}{3},
  \ \ \ \text{or} \ \ \
  \mathop{\mathrm{Re}} A < \sqrt{3} \, \mathop{\mathrm{Im}} A < 3 \mathop{\mathrm{Re}} A.
\]

Thus,
\[
  x_2 < x_3 < x_1, \ \ \ \xi = -\frac{A+B}{2} - \frac{A-B}{2} \, \sqrt{3} \, \mathrm{i} = -\mathop{\mathrm{Re}} A + \sqrt{3} \, \mathop{\mathrm{Im}} A \in (0,1),
\]
i.e., the formula~\eqref{eqXiModelingEx1} is valid.
\end{proof}

\begin{Proposition}\label{exXiModeling2}
For the random variable $\xi $ with density
\[
  g(x) = \frac{12}{17} \left\{ \begin{array}{ll}
    1+x^{3} & \text{for} ~ x \in [-1,0), \\
    1-x^{2} & \text{for} ~ x \in [0,1], \\
    0 & \text{otherwise},
  \end{array} \right.
\]
the modeling formula is as follows:
\begin{equation}\label{eqXiModelingEx2}
  \xi = \left\{ \begin{array}{ll}
    \displaystyle -\sqrt{\frac{Y}{2}} + \sqrt{-\frac{Y}{2} + \sqrt{\frac{2}{Y}}} & \text{for} ~ \displaystyle \alpha < \frac{9}{17}, \\ [-2ex] \\
    \displaystyle -\mathop{\mathrm{Re}} A + \sqrt{3} \, \mathop{\mathrm{Im}} A & \text{for} ~ \displaystyle \alpha \geqslant \frac{9}{17},
  \end{array} \right.
\end{equation}
where
\[
  Y = \omega - \frac{17\alpha-9}{9 \omega}, \ \ \ \omega = \sqrt[3]{1 + \sqrt{1 + \frac{(17\alpha-9)^3}{729}}}, \ \ \ A = \sqrt[3]{\frac{9-17\alpha}{8} + \sqrt{\frac{(17\alpha-9)^2}{64} - 1}}.
\]
\end{Proposition}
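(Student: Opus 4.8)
The plan is to mirror the two-case structure of the proof of Proposition~\ref{exXiModeling1}. First I would recognize the given density as the member $g_{3,2}$ of the family~\eqref{eqDensn1n2}, i.e.\ $\nu_1 = 3$, $\nu_2 = 2$; then $\frac{\nu_1}{\nu_1+1} = \frac34$, $\gamma_{3,2} = \bigl(\frac34 + \frac23\bigr)^{-1} = \frac{12}{17}$, and $f_{3,2}(0) = \gamma_{3,2}\,\frac{\nu_1}{\nu_1+1} = \frac{9}{17}$, so the switch value $9/17$ in~\eqref{eqXiModelingEx2} is exactly the threshold $f_{3,2}(0)$ from Algorithm~\ref{algInverseFunc}; in each case I must solve the corresponding equation, \eqref{eqXiModeling1} or~\eqref{eqXiModeling2}, and select the root prescribed there (using that $f_{3,2}$ is increasing, from $0$ to $\frac9{17}$ on $[-1,0]$ and from $\frac9{17}$ to $1$ on $[0,1]$).

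For $\alpha \geqslant \frac{9}{17}$, clearing denominators in~\eqref{eqXiModeling2} gives the depressed cubic $x^3 - 3x + \frac{17\alpha-9}{4} = 0$, of the same shape $x^3 + px + q$ with $p = -3$ as the cubic in Proposition~\ref{exXiModeling1}, now with $q = \frac{17\alpha-9}{4} \geqslant 0$. I would reuse that analysis: $w(x) = x^3 - 3x + q$ has a local maximum at $x = -1$ with $w(-1) = q+2 > 0$ and a local minimum at $x = 1$ with $w(1) = q - 2 < 0$ (since $\alpha < 1$), hence three real roots, and $\xi$ is the middle one. With $z = -\frac q2 + \sqrt{Q}$ and $Q = \bigl(\frac p3\bigr)^3 + \bigl(\frac q2\bigr)^2 = \frac{(17\alpha-9)^2}{64} - 1 < 0$, one has $\mathop{\mathrm{Re}} z = \frac{9-17\alpha}{8} \leqslant 0$ and $\mathop{\mathrm{Im}} z > 0$, so $\arg z \in [\frac\pi2,\pi)$ and $\arg A \in [\frac\pi6,\frac\pi3)$ for $A = \sqrt[3]{z}$; Cardano's formulae then give the middle root as $\xi = -\mathop{\mathrm{Re}} A + \sqrt3\,\mathop{\mathrm{Im}} A$ with $A$ as stated, and the bounds on $\arg A$ give $\mathop{\mathrm{Re}} A, \mathop{\mathrm{Im}} A > 0$ and place $\xi$ in $(0,1)$. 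This case needs no new computation beyond the new coefficients.

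The substantive case is $\alpha < \frac{9}{17}$, where clearing denominators in~\eqref{eqXiModeling1} yields the quartic $x^4 + 4x + c = 0$ with $c = \frac{9-17\alpha}{3} \in (0,3)$. I would first locate the wanted root: $\phi(x) = x^4 + 4x + c$ has a single critical point, a global minimum, at $x = -1$, with $\phi(-1) = c - 3 < 0$ and $\phi(0) = c > 0$, so $\phi$ has exactly two real zeros, precisely one in $(-1,0)$, and that one is $\xi$. To obtain it in closed form I would apply Ferrari's reduction: $x^4 + 4x + c = (x^2+Y)^2 - \bigl(2Yx^2 - 4x + (Y^2-c)\bigr)$, and requiring the bracket to be a perfect square forces the resolvent cubic $Y^3 - cY - 2 = 0$, i.e.\ $Y^3 + \frac{17\alpha-9}{3}Y - 2 = 0$, whose discriminant is negative for $\alpha \in (0,\frac9{17})$, so it has a unique real root. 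Cardano writes this root as a sum of two real cube roots whose radicands multiply to $-\frac{(17\alpha-9)^3}{729} = \bigl(-\frac{17\alpha-9}{9}\bigr)^3$, which collapses the sum to $Y = \omega - \frac{17\alpha-9}{9\omega}$ with $\omega$ as stated ($\omega$ real and positive because $1 + \frac{(17\alpha-9)^3}{729} > 0$ for $\alpha > 0$). For this $Y$ one checks $Y^2 - c = \frac2Y$, so $2Yx^2 - 4x + (Y^2-c) = \bigl(\sqrt{2Y}\,x - \sqrt{2/Y}\bigr)^2$ and hence
\[
  x^4 + 4x + c = \bigl(x^2 - \sqrt{2Y}\,x + Y + \sqrt{2/Y}\bigr)\bigl(x^2 + \sqrt{2Y}\,x + Y - \sqrt{2/Y}\bigr).
\]
The first factor has negative discriminant; for the second I would verify $2^{1/3} < Y < 2$ for $\alpha \in (0,\frac9{17})$ — evaluating the resolvent cubic at $Y = 2^{1/3}$ (negative) and $Y = 2$ (value $\frac{34\alpha}{3} > 0$) and using its monotonicity past its positive critical point — so that the second factor has positive discriminant, with real roots $-\sqrt{Y/2} \pm \sqrt{-Y/2 + \sqrt{2/Y}}$. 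Their product $Y - \sqrt{2/Y} > 0$ and their sum $-\sqrt{2Y} < 0$, so both are negative; the larger — the one lying in $(-1,0)$ — is $\xi = -\sqrt{Y/2} + \sqrt{-Y/2 + \sqrt{2/Y}}$, as claimed.

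The main obstacle is the quartic case: carrying out Ferrari's resolvent-cubic reduction cleanly, performing the algebraic collapse of Cardano's formula to the stated closed form for $Y$, and — most delicately — verifying the size conditions $Y > 0$ and $Y < 2$, which are precisely what makes the inner radical $\sqrt{-Y/2 + \sqrt{2/Y}}$ real and singles out the root that actually lies in $(-1,0)$. The cubic case is routine once Proposition~\ref{exXiModeling1} is available.
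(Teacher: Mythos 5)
Your proposal is correct and follows essentially the same route as the paper: identify the density as $g_{3,2}$, split at the threshold $f_{3,2}(0)=9/17$, reuse the cubic analysis of Proposition~\ref{exXiModeling1} for $\alpha\geqslant 9/17$, and solve the quartic by Ferrari's method for $\alpha<9/17$. The paper merely states that Ferrari's formulae are convenient and omits the details, whereas you carry them out explicitly (resolvent cubic, the collapse to $Y=\omega-\tfrac{17\alpha-9}{9\omega}$, and the localization $2^{1/3}<Y<2$ that makes the radicals real and selects the root in $(-1,0)$); all of these computations check out.
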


\begin{proof}
The density $g$ of the random variable $\xi$ is included in a two-parameter family~\eqref{eqDensn1n2} for $\nu_1 = \nu+1$, $\nu_2 = \nu$, and $\nu = 2$: $g = g_{3,2}$. In this case,
\[
  \frac{\nu_1}{\nu_1+1} = \frac{3}{4}, \ \ \ \gamma_{3,2} = \frac{12}{17}, \ \ \ f_{3,2}(0) = \frac{9}{17}.
\]

The proof is the same as for Proposition~\ref{exXiModeling1}, so some details are omitted. We only note that for $\alpha < f_{3,2}(0) = 9/17$, the equation~\eqref{eqXiModeling1} is the quartic equation
\[
  \frac{3}{4} + x + \frac{x^4}{4} - \frac{17\alpha}{12} = 0, \ \ \ \text{or} \ \ \ x^4 + bx + c = 0,
\]
where $b = 4$, $c = 3 - 17\alpha/3 > 0$, and the polynomial $v(x) = x^4 + 4x + 3 - 17\alpha/3$ has a minimum at $x^* = -1$, and $v(x^*) = -17\alpha/3 < 0$. The quartic equation has two real roots, and the largest of them determines $\xi$. It is convenient to find roots using Ferrari's formulae~\cite{KornKorn_00}.

For $\alpha \geqslant f_{3,2}(0) = 9/17$, the equation~\eqref{eqXiModeling2} is the cubic equation that is solved similarly to the cubic equation from the proof of Proposition~\ref{exXiModeling1}. Such reasoning leads to the formula~\eqref{eqXiModelingEx2}.
\end{proof}

\subsection{Expansion Coefficients of Test Functions (Fourier--Legendre Series)}

To exactly calculate the second term of the projection estimate error~\eqref{eqErrorG0} in examples, we find expansion coefficients $G_{\nu_1,\nu_2,i}$ of the density $g_{\nu_1,\nu_2}$, as well as expansion coefficients $F_{\nu_1,\nu_2,i}$ of the distribution function $f_{\nu_1,\nu_2}$ with respect to Legendre polynomials~\eqref{eqDefLeg}:
\[
  g_{\nu_1,\nu_2}(x) = \sum\limits_{i=0}^\infty G_{\nu_1,\nu_2,i} \hat P_i(x), \ \ \
  f_{\nu_1,\nu_2}(x) = \sum\limits_{i=0}^\infty F_{\nu_1,\nu_2,i} \hat P_i(x).
\]

First, we obtain the following values:
\begin{equation}\label{eqDefQpm}
  Q_{\nu,i}^- = \int_{-1}^0 x^\nu P_i(x) dx, \ \ \ Q_{\nu,i}^+ = \int_0^1 x^\nu P_i(x) dx.
\end{equation}

We multiply the left-hand and right-hand sides of the relation~\eqref{eqNNLegRec1} by $x^{\nu-1}$ and integrate over the interval $[-1,0]$:
\[
  (i+1) \int_{-1}^0 x^{\nu-1} P_{i+1}(x) dx = (2i+1) \int_{-1}^0 x^\nu P_i(x) dx - i \int_{-1}^0 x^{\nu-1} P_{i-1}(x) dx,
\]
or
\begin{equation}\label{eqQpmRec1}
  (i+1) Q_{\nu-1,i+1}^- = (2i+1) Q_{\nu,i}^- - i Q_{\nu-1,i-1}^-, \ \ \ Q_{\nu,i}^- = \frac{(i+1) Q_{\nu-1,i+1}^- + i Q_{\nu-1,i-1}^-}{2i+1}.
\end{equation}

Similarly, by multiplying the left-hand and right-hand sides of the relation~\eqref{eqNNLegRec1} by $x^{\nu-1}$ and integrating over the interval $[0,1]$, we obtain
\[
  Q_{\nu,i}^+ = \frac{(i+1) Q_{\nu-1,i+1}^+ + i Q_{\nu-1,i-1}^+}{2i+1}.
\]

These relations can be formally applied for $i = 0$ when $Q_{\nu,-1}^- = Q_{\nu,-1}^+ = 0$ but not for $\nu = 0$. Therefore, we should consider the case $\nu = 0$ separately:
\[
  Q_{0,i}^- = \int_{-1}^0 P_i(x) dx, \ \ \ Q_{0,i}^+ = \int_0^1 P_i(x) dx.
\]

For $i = 0$ and $i = 1$, we have
\[
  Q_{0,0}^- = \int_{-1}^0 P_0(x) dx = \int_{-1}^0 dx = 1, \ \ \ Q_{0,1}^- = \int_{-1}^0 P_1(x) dx = \int_{-1}^0 x dx = -\frac{1}{2},
\]
and then we use the relation~\eqref{eqRecAux}:
\[
  Q_{0,i}^- = \int_{-1}^0 P_i(x) dx = \frac{P_{i+1}(0) - P_{i-1}(0)}{2i+1}, \ \ \ i = 1,2,\dots
\]

If $i \neq 0$ is even, then $P_{i+1}(0) = P_{i-1}(0) = 0$ according to the formula~\eqref{eqLegExpB}, and $Q_{0,i}^- = 0$. If $i$ is odd, then we can apply the explicit formula~\eqref{eqLegExpA} or obtain an additional recurrence relation. We choose the latter way and take into account the relation~\eqref{eqNNLegRec1} for $x = 0$:
\[
  (i+1) P_{i+1}(0) = -i P_{i-1}(0).
\]

Then
\begin{align*}
  \frac{Q_{0,i}^-}{Q_{0,i-2}^-} & = \frac{P_{i+1}(0) - P_{i-1}(0)}{2i+1} \, \frac{2i-3}{P_{i-1}(0) - P_{i-3}(0)} \\
  & = \frac{-\frac{i}{i+1} \, P_{i-1}(0) - P_{i-1}(0)}{2i+1} \, \frac{2i-3}{P_{i-1}(0) + \frac{i-1}{i-2} \, P_{i-1}(0)} \\
  & = \frac{-\frac{i}{i+1} - 1}{2i+1} \, \frac{2i-3}{1 + \frac{i-1}{i-2}},
\end{align*}
i.e.,
\[
  Q_{0,i}^- = \frac{2-i}{i+1} \, Q_{0,i-2}^-, \ \ \ i = 3,5,7,\dots
\]

The same reasoning leads to the following results:
\[
  Q_{0,0}^+ = \int_0^1 P_0(x) dx = \int_0^1 dx = 1, \ \ \ Q_{0,1}^+ = \int_0^1 P_1(x) dx = \int_0^1 x dx = \frac{1}{2},
\]
and
\[
  Q_{0,i}^+ = \frac{P_{i-1}(0) - P_{i+1}(0)}{2i+1}, \ \ \ i = 1,2,\dots,
\]
where $Q_{0,i}^+ = 0$ for even $i \neq 0$. For odd $i$, we have
\[
  Q_{0,i}^+ = \frac{2-i}{i+1} \, Q_{0,i-2}^+, \ \ \ i = 3,5,7,\dots
\]

Thus, we obtain the general expression for calculating $Q_{\nu,i}^-$ and $Q_{\nu,i}^+$ with arbitrary non-negative integers $\nu$ and $i$:
\[
  Q_{\nu,i}^\pm = \left\{ \begin{array}{ll}
    \displaystyle 1\vphantom{\frac12} & \text{for} ~ \nu = i = 0, \\
    \displaystyle \pm \frac{1}{2} & \text{for} ~ \nu = 0 ~ \text{and} ~ i = 1, \\
    \displaystyle 0\vphantom{\frac12} & \text{for} ~ \nu = 0 ~ \text{and} ~ i = 2,4,6,\dots, \\
    \displaystyle \frac{2-i}{i+1} \, Q_{0,i-2}^\pm & \text{for} ~ \nu = 0 ~ \text{and} ~ i = 3,5,7,\dots, \\
    \displaystyle \frac{(i+1) Q_{\nu-1,i+1}^\pm + i Q_{\nu-1,i-1}^\pm}{2i+1} & \text{otherwise},
  \end{array} \right.
\]
so that expansion coefficients of the density $g_{\nu_1,\nu_2}$ with respect to Legendre polynomials~\eqref{eqDefLeg} are expressed as follows (the relation~\eqref{eqDefQpm} is also used here):
\begin{align*}
  G_{\nu_1,\nu_2,i} & = \int_{-1}^1 g_{\nu_1,\nu_2}(x) \hat P_i(x) dx \\
  & = \gamma_{\nu_1,\nu_2} \, \sqrt{\frac{2i+1}{2}} \biggl( \int_{-1}^0 \bigl( 1-(-x)^{\nu_1} \bigr) P_i(x) dx + \int_0^1 (1-x^{\nu_2}) P_i(x) dx \biggr) \\
  & = \gamma_{\nu_1,\nu_2} \, \sqrt{\frac{2i+1}{2}} \, \bigl( Q_{0,i}^- - (-1)^{\nu_1} Q_{\nu_1,i}^- + Q_{0,i}^+ - Q_{\nu_2,i}^+ \bigr).
\end{align*}

Expressions for expansion coefficients of the distribution function $f_{\nu_1,\nu_2}$ are similarly obtained:
\begin{align*}
  F_{\nu_1,\nu_2,i} & = \int_{-1}^1 f_{\nu_1,\nu_2}(x) \hat P_i(x) dx \\
  & = \gamma_{\nu_1,\nu_2} \, \sqrt{\frac{2i+1}{2}} \biggl( \int_{-1}^0 \biggl[ \frac{\nu_1}{\nu_1+1} + x + \frac{(-x)^{\nu_1+1}}{\nu_1+1} \biggr] P_i(x) dx \\
  & \ \ \ \ \ \ + \int_0^1 \biggl[ \frac{\nu_1}{\nu_1+1} + x - \frac{x^{\nu_2+1}}{\nu_2+1} \biggr] P_i(x) dx \biggr) \\
  & = \gamma_{\nu_1,\nu_2} \, \sqrt{\frac{2i+1}{2}} \, \biggl( \frac{\nu_1}{\nu_1+1} \, Q_{0,i}^- + Q_{1,i}^- + \frac{(-1)^{\nu_1+1}}{\nu_1+1} \, Q_{\nu_1+1,i}^- \\
  & \ \ \ \ \ \ + \frac{\nu_1}{\nu_1+1} \, Q_{0,i}^+ + Q_{1,i}^+ - \frac{1}{\nu_2+1} \, Q_{\nu_2+1,i}^+ \biggr).
\end{align*}

These expressions for expansion coefficients of the density  $g_{\nu_1,\nu_2}$ and distribution function $f_{\nu_1,\nu_2}$ are used for their approximation as
\[
  g_{\nu_1,\nu_2}(x) \approx g_{\nu_1,\nu_2}^{\langle n \rangle}(x) = \sum\limits_{i=0}^n G_{\nu_1,\nu_2,i} \hat P_i(x), \ \ \
  f_{\nu_1,\nu_2}(x) \approx f_{\nu_1,\nu_2}^{\langle n \rangle}(x) = \sum\limits_{i=0}^n F_{\nu_1,\nu_2,i} \hat P_i(x),
\]
and for the approximation error:
\begin{equation}\label{eqTestEstimates}
  \begin{aligned}
    \varepsilon_g^{\langle n \rangle} & = \| g_{\nu_1,\nu_2} - g_{\nu_1,\nu_2}^{\langle n \rangle} \|_{L_2(\Omega)} = \biggl( \int_{-1}^1 g_{\nu_1,\nu_2}^2(x) dx - \sum\limits_{i=0}^n G_{\nu_1,\nu_2,i}^2 \biggr)^{1/2}, \\
    \varepsilon_f^{\langle n \rangle} & = \| f_{\nu_1,\nu_2} - f_{\nu_1,\nu_2}^{\langle n \rangle} \|_{L_2(\Omega)} = \biggl( \int_{-1}^1 f_{\nu_1,\nu_2}^2(x) dx - \sum\limits_{i=0}^n F_{\nu_1,\nu_2,i}^2 \biggr)^{1/2}.
  \end{aligned}
\end{equation}

\subsection{Analysis of Convergence Rate for Expansions of Test Functions}

Consider the function
\[
  y_\nu^-(x) = \left\{ \begin{array}{ll}
    x^\nu & \text{for} ~ x \in [-1,0), \\
    0 & \text{for} ~ x \in [0,1],
  \end{array} \right.
\]
where $\nu$ is a parameter (natural number). Its expansion coefficients $Y_{\nu,i}^-$ with respect to Legendre polynomials~\eqref{eqDefLeg} are expressed through the previously found values $Q_{\nu,i}^-$:
\[
  y_\nu^-(x) = \sum\limits_{i=0}^\infty Y_{\nu,i}^- \hat P_i(x), \ \ \ Y_{\nu,i}^- = \sqrt{\frac{2i+1}{2}} \, Q_{\nu,i}^-.
\]

Further, we derive the recurrence relation for $Q_{\nu,i}^-$, different from the relation~\eqref{eqQpmRec1}. We multiply the left-hand and right-hand sides of the relation~\eqref{eqNNLegRec1} by $x^\nu$ and integrate over the interval $[-1,0]$:
\[
  (i+1) \int_{-1}^0 x^\nu P_{i+1}(x) dx = (2i+1) \int_{-1}^0 x^{\nu+1} P_i(x) dx - i \int_{-1}^0 x^\nu P_{i-1}(x) dx.
\]

Next, we use the rule of integration by parts:
\[
  \int_{-1}^0 x^{\nu+1} P_i(x) dx = x^{\nu+1} \Theta_i(x) \bigg|_{-1}^0 - (\nu+1) \int_{-1}^0 x^\nu \Theta_i(x) dx, \ \ \ \Theta_i(x) = \int_{-1}^x P_i(\theta) d\theta,
\]
and consequently, taking into account the equality~\eqref{eqRecAux}, we obtain
\[
  \int_{-1}^0 x^{\nu+1} P_i(x) dx = -\frac{\nu+1}{2i+1} \int_{-1}^0 x^\nu \bigl( P_{i+1}(x) - P_{i-1}(x) \bigr) dx;
\]
therefore,
\[
  (i+1) \int_{-1}^0 x^\nu P_{i+1}(x) dx = -(\nu+1) \int_{-1}^0 x^\nu \bigl( P_{i+1}(x) - P_{i-1}(x) \bigr) dx - i \int_{-1}^0 x^\nu P_{i-1}(x) dx,
\]
or
\[
  (i+1) Q_{\nu,i+1}^- = -(\nu+1) (Q_{\nu,i+1}^- - Q_{\nu,i-1}^-) - i Q_{\nu,i-1}^-,
\]
i.e.,
\[
  Q_{\nu,i+1}^- = \frac{\nu-i+1}{\nu+i+2} \, Q_{\nu,i-1}^- \ \ \ \text{and} \ \ \ Y_{\nu,i+1}^- = \sqrt{\frac{2i+3}{2i-1}} \, \frac{\nu-i+1}{\nu+i+2} \, Y_{\nu,i-1}^-.
\]

The series formed by the squared expansion coefficients $Y_{\nu,i}^-$ converges since $y_\nu^- \in L_2(\Omega)$. It can be represented as a sum of two series:
\[
  \sum\limits_{i=0}^\infty (Y_{\nu,i}^-)^2 = \sum\limits_{j=0}^\infty (Y_{\nu,2j}^-)^2 + \sum\limits_{j=0}^\infty (Y_{\nu,2j+1}^-)^2.
\]

The Raabe--Duhamel test~\cite{Fih_06} implies that the first series (similarly, for the second one) converges in the same way as the series
\[
  \sum\limits_{j=1}^\infty \frac{1}{j^{2\nu+2}},
\]
since the sequence
\[
  R_j = j \biggl( \frac{(Y_{\nu,2j}^-)^2}{(Y_{\nu,2j+2}^-)^2} - 1 \biggr) = j \biggl( \frac{4j-1}{4j+3} \biggl[ \frac{\nu+2j+2}{\nu-2j+1} \biggr]^2 - 1 \biggr)
\]
has the limit $\lim\limits_{j \to \infty} R_j = 2\nu+2$, but the convergence of this series is equivalent to the convergence of the integral
\[
  \int_1^{+\infty} \frac{dt}{t^{2\nu+2}} = -\frac{1}{(2\nu+1) \, t^{2\nu+1}} \bigg|_1^{+\infty} = -\frac{1}{2\nu+1} \biggl( \lim\limits_{t \to \infty} \frac{1}{t^{2\nu+1}} - 1 \biggr)
\]
which takes place under the condition $2\nu+1 > 0$, or $\nu > -1/2$.

As a result, using Parseval's identity, we find
\[
  \| y_\nu^- - y_\nu^{-\langle n \rangle} \|_{L_2(\Omega)}^2 = \sum\limits_{i=n+1}^\infty (Y_{\nu,i}^-)^2 \leqslant \frac{C^2}{n^{2\nu+1}}
  \ \ \ \text{and} \ \ \
  \| y_\nu^- - y_\nu^{-\langle n \rangle} \|_{L_2(\Omega)} \leqslant \frac{C}{n^{\nu+1/2}}, \ \ \ C > 0,
\]
where
\[
  y_\nu^{-\langle n \rangle}(x) = \sum\limits_{i=0}^n Y_{\nu,i}^- \hat P_i(x),
\]
and this corresponds to the estimate~\eqref{eqEstimateL2A} with the limit value $s = \nu+1/2$ (see Remark~\ref{remSreal}).

The obtained result can be transferred to the function
\[
  y_\nu^+(x) = \left\{ \begin{array}{ll}
    0 & \text{for} ~ x \in [-1,0), \\
    x^\nu & \text{for} ~ x \in [0,1],
  \end{array} \right.
\]
and its expansion coefficients $Y_{\nu,i}^+$ with respect to Legendre polynomials~\eqref{eqDefLeg}. The easiest way to prove this is the use of the equality $|Y_{\nu,i}^-| = |Y_{\nu,i}^+|$, which follows from the relation between expansion coefficients of an arbitrary function from $L_2(\Omega)$ and its reflection~\cite{Ryb_Comp25}. Therefore, the same result holds for the function $g_{\nu_1,\nu_2}$. This result can be extended to the function $f_{\nu_1,\nu_2}$ due to its degree of smoothness is greater by one than the degree of smoothness of $g_{\nu_1,\nu_2}$.

Thus,
\begin{equation}\label{eqTestConvergenceRate}
  \varepsilon_g^{\langle n \rangle} \approx \frac{C^g}{n^{\min\{\nu_1,\nu_2\} + 1/2}}, \ \ \ \varepsilon_f^{\langle n \rangle} \approx \frac{C^f}{n^{\min\{\nu_1,\nu_2\} + 3/2}},
\end{equation}
where $C^g,C^f > 0$ are some constants. Moreover, we can assume that $\nu_1$ and $\nu_2$ are real positive numbers, and if we consider $g_{\nu_1,\nu_2}$ not to be a density but some function not bound by the probability theoretical framework, then the condition $\nu_1,\nu_2 > -1/2$ is admissible. Such a convergence reflects that $g_{\nu_1,\nu_2}^{(s)} \in W_2^s(\Omega)$ and $f_{\nu_1,\nu_2}^{(s)} \in W_2^{s+1}(\Omega)$ subject to $s < \min\{\nu_1,\nu_2\} + 1/2$ and $\sup s = \min\{\nu_1,\nu_2\} + 1/2$. It corresponds to the estimate~\eqref{eqEstimateL2A}.

\section{Numerical Experiments}\label{secNumerics}

In this section, we present the results of the joint estimation of the density and distribution function for two examples that use a two-parameter family~\eqref{eqDensn1n2} of densities with different degrees of smoothness. In these examples, the results are presented in the tables that contain errors of projection estimates of the density and distribution function in $L_2(\Omega)$. We study the dependence of error on the projection expansion length (for the maximum degree $n$ of the Legendre polynomials, values 4, 8, 16, 32, and 64 are used, i.e., $n = 2^{k+2}$ for $k = 0,1,\dots,4$), on the sample size $N = 2^{m+9}$ for $m = 0,1,\dots,18$, and on the degree of smoothness of the approximated density (see Examples~\ref{ex1} and~\ref{ex2} below). Algorithm~\ref{algProjEstimate2} is applied for estimation.

These examples show the approximation errors $\varepsilon_g^{\langle n \rangle}$ and $\varepsilon_f^{\langle n \rangle}$, which are calculated using the formulae~\eqref{eqTestEstimates}, i.e., deterministic components of projection estimate errors. In the tables, they are in rows marked by the symbol ``$*$''. The remaining rows contain errors that include deterministic and stochastic components. The formulae for these errors follow from Parseval's identity:
\begin{align*}
  \bar\varepsilon_g^{\langle n \rangle} & = \biggl( \int_{-1}^1 g_{\nu_1,\nu_2}^2(x) dx - \sum\limits_{i=0}^n G_{\nu_1,\nu_2,i}^2 + \sum\limits_{i=0}^n (G_{\nu_1,\nu_2,i} - \bar{G}_{\nu_1,\nu_2,i})^2 \biggr)^{1/2}, \\
  \bar\varepsilon_f^{\langle n \rangle} & = \biggl( \int_{-1}^1 f_{\nu_1,\nu_2}^2(x) dx - \sum\limits_{i=0}^n F_{\nu_1,\nu_2,i}^2 + \sum\limits_{i=0}^n (F_{\nu_1,\nu_2,i} - \bar{F}_{\nu_1,\nu_2,i})^2 \biggr)^{1/2},
\end{align*}
where $\bar{G}_{\nu_1,\nu_2,i}$ and $\bar{F}_{\nu_1,\nu_2,i}$ are estimates of expansion coefficients $G_{\nu_1,\nu_2,i}$ and $F_{\nu_1,\nu_2,i}$, respectively. For an arbitrary density $g$, a similar formula was used to obtain the estimate~\eqref{eqErrorG0}.

\begin{Example}\label{ex1}
Let $g_{\nu_1,\nu_2}$ be the density from a two-parameter family~\eqref{eqDensn1n2}, and let $f_{\nu_1,\nu_2}$ be the related distribution function described by the formula~\eqref{eqDistribn1n2}, with the following parameters: $\nu_1 = \nu$, $\nu_2 = \nu + 1$ for $\nu = 1$. The modeling formula is given in Proposition~\ref{exXiModeling1}.

The function $g_{1,2}$ is continuous, non-differentiable at $x = 0$, but its first-order derivative exists almost everywhere on $\Omega$: $g_{1,2} \in C(\Omega)$, $g_{1,2}^{(1)} \in L_2(\Omega)$, i.e., $g_{1,2} \in W_2^1(\Omega)$. However, if we do not restrict ourselves to the space $W_2^s(\Omega)$ with natural $s$ (see Remark~\ref{remSreal}), then $g_{1,2} \in W_2^s(\Omega)$ for $s < \min\{\nu_1,\nu_2\} + 1/2 = 3/2$ ($\sup s = 3/2$). The degree of smoothness of $f_{1,2}$ is greater, and $f_{1,2} \in W_2^{s+1}(\Omega)$. This corresponds to the formulae~\eqref{eqTestConvergenceRate} which take the form
\[
  \varepsilon_g^{\langle n \rangle} \approx \frac{C^g}{n^{3/2}}, \ \ \ \varepsilon_f^{\langle n \rangle} \approx \frac{C^f}{n^{5/2}}
\]
for given parameters $\nu_1$ and $\nu_2$.

According to Theorem~\ref{thmMain}, for the limit value $s = 3/2$, to achieve the required approximation accuracy $B(g_{1,2},\bar{g}_{1,2}^{\langle n \rangle}) = \gamma \leqslant c N^{-3/8}$, $c > 0$, the conditionally optimal parameters should be as follows: $N_\mathrm{opt} \asymp n_\mathrm{opt}^4$; this is confirmed by the statistical modeling results (see Tables~\ref{tabErrorG1} and~\ref{tabErrorF1}). In the row ``$*$'' of Table~\ref{tabErrorG1}, the deterministic component of error decreases by approximately $2^{3/2} \approx 2.8$ times (in Table~\ref{tabErrorF1}, by approximately $2^{5/2} \approx 5.6$ times) when the projection expansion length $n$ is doubled. In the rest of this table, errors corresponding to optimal parameters $n_\mathrm{opt}$ and $N_\mathrm{opt}$ are highlighted in bold, and they are consistent with the relationship $N_\mathrm{opt} \asymp n_\mathrm{opt}^4$. Table~\ref{tabErrorF1} demonstrates higher accuracy of distribution function estimation.

In our calculations with the formulae for errors, the following values are used (squared norms of the functions $g_{1,2}$ and $f_{1,2}$):
\[
  \int_{-1}^1 g_{1,2}^2(x) dx = \frac{156}{245}, \ \ \ \int_{-1}^1 f_{1,2}^2(x) dx = \frac{235}{343}.
\]

For clarity, Figure~\ref{picTables} contains the approximation errors in graphical form. One axis shows $k = 0,1,\dots,4$, which determines the projection expansion length, $n = 2^{k+2}$, and another axis shows $m = 0,1,\dots,18$, which determines the sample size, $N = 2^{m+9}$. The vertical axis corresponds to the density approximation errors.

The example under consideration corresponds to the left part of Figure~\ref{picTables} which presents two surfaces. The first one (\textcolor{red}{red}) corresponds to the obtained computational error, it is formed on data from Table~\ref{tabErrorG1}. The second one (\textcolor{blue}{blue}, with marked nodes) corresponds to the theoretical error according to the formula~\eqref{eqErrorG} with $s = 3/2$:
\[
  \epsilon_g^{\langle n \rangle} \approx \sqrt{\frac{C_1^g n}{N} + \frac{C_2^g}{n^3}}.
\]

Constants $C_1^g = 0.885$ and $C_2^g = 0.276$ are approximately determined based on the condition of a minimum sum of squared deviations between the theoretical error and computational error.
\end{Example}

\begin{table}[p]
\begin{center}
\renewcommand{\arraystretch}{1.05}
\caption{The approximation errors $\varepsilon_g^{\langle n \rangle}$ \textcolor{blue}{(in the row ``$*$'')} and $\bar\varepsilon_g^{\langle n \rangle}$ \textcolor{blue}{(in the remaining rows)} for sample sizes $N = 2^{m+9}$ (Example~\ref{ex1})}\label{tabErrorG1}
\begin{tabular}{|c|c|c|c|c|c|}
  \hline
  $m$ &\ \ \ \ $n = 4$\ \ \ \ &\ \ \ \ $n = 8$\ \ \ \ &\ \ \ $n = 16$\ \ \ &\ \ \ $n = 32$\ \ \ &\ \ \ $n = 64$\ \ \ \\
  \hline
  \hline
  $*$ & 0.024614 & 0.009800 & 0.003838 & 0.001442 & 0.000527 \\
  \hline
  \hline
   0 & \textbf{0.065300} & 0.068625 & 0.119784 & 0.168044 & 0.209996 \\
   1 & 0.029999 & 0.020867 & 0.064303 & 0.107275 & 0.144097 \\
   2 & 0.026360 & 0.031022 & 0.046008 & 0.061732 & 0.093907 \\
   3 & 0.035117 & 0.028862 & 0.036685 & 0.045302 & 0.064636 \\
   4 & 0.028603 & \textbf{0.020236} & 0.026896 & 0.035032 & 0.050834 \\
   5 & 0.025879 & 0.015305 & 0.018405 & 0.025655 & 0.034847 \\
   6 & 0.025867 & 0.014401 & 0.014068 & 0.017283 & 0.023956 \\
   7 & 0.026084 & 0.015807 & 0.014762 & 0.017245 & 0.021088 \\
   8 & 0.024814 & 0.011465 & \textbf{0.008550} & 0.010850 & 0.015997 \\
   9 & 0.024640 & 0.010035 & 0.005534 & 0.006907 & 0.010124 \\
  10 & 0.024638 & 0.009976 & 0.004925 & 0.005121 & 0.007505 \\
  11 & 0.024622 & 0.009888 & 0.004352 & 0.003211 & 0.005116 \\
  12 & 0.024616 & 0.009815 & 0.004112 & \textbf{0.002461} & 0.003081 \\
  13 & 0.024616 & 0.009814 & 0.003937 & 0.002064 & 0.002367 \\
  14 & 0.024616 & 0.009807 & 0.003881 & 0.001813 & 0.001689 \\
  15 & 0.024615 & 0.009802 & 0.003852 & 0.001623 & 0.001308 \\
  16 & 0.024615 & 0.009803 & 0.003851 & 0.001553 & \textbf{0.001093} \\
  17 & 0.024615 & 0.009802 & 0.003852 & 0.001503 & 0.000830 \\
  18 & 0.024615 & 0.009801 & 0.003845 & 0.001470 & 0.000647 \\
  \hline
\end{tabular}
\end{center}
\end{table}

\begin{table}[p]
\begin{center}
\renewcommand{\arraystretch}{1.05}
\caption{The approximation errors $\varepsilon_f^{\langle n \rangle}$ \textcolor{blue}{(in the row ``$*$'')} and $\bar\varepsilon_f^{\langle n \rangle}$ \textcolor{blue}{(in the remaining rows)} for sample sizes $N = 2^{m+9}$ (Example~\ref{ex1})}\label{tabErrorF1}
\begin{tabular}{|c|c|c|c|c|c|}
  \hline
  $m$ & \ \ \ \ $n = 4$\ \ \ \ &\ \ \ \ $n = 8$\ \ \ \ &\ \ \ $n = 16$\ \ \ &\ \ \ $n = 32$\ \ \ &\ \ \ $n = 64$\ \ \ \\
  \hline
  \hline
  $*$ & 0.005772 & 0.001069 & 0.000198 & 0.000036 & $6.48 \cdot 10^{-6}$ \\ 
  \hline
  \hline
   0 & 0.027351 & 0.026935 & 0.027769 & 0.028143 & 0.028244 \\
   1 & 0.010817 & 0.009287 & 0.010189 & 0.010621 & 0.010828 \\
   2 & 0.006060 & 0.004198 & 0.004744 & 0.005001 & 0.005172 \\
   3 & 0.008635 & 0.007038 & 0.007179 & 0.007255 & 0.007313 \\
   4 & 0.006667 & 0.004018 & 0.004166 & 0.004234 & 0.004284 \\
   5 & 0.006149 & 0.002746 & 0.002680 & 0.002754 & 0.002788 \\
   6 & 0.006159 & 0.002666 & 0.002519 & 0.002544 & 0.002561 \\
   7 & 0.006186 & 0.002722 & 0.002571 & 0.002594 & 0.002604 \\
   8 & 0.005845 & 0.001519 & 0.001160 & 0.001182 & 0.001205 \\
   9 & 0.005784 & 0.001152 & 0.000535 & 0.000539 & 0.000558 \\
  10 & 0.005801 & 0.001230 & 0.000661 & 0.000647 & 0.000656 \\
  11 & 0.005783 & 0.001140 & 0.000449 & 0.000411 & 0.000420 \\
  12 & 0.005773 & 0.001081 & 0.000263 & 0.000183 & 0.000186 \\
  13 & 0.005775 & 0.001087 & 0.000282 & 0.000210 & 0.000209 \\
  14 & 0.005773 & 0.001080 & 0.000250 & 0.000161 & 0.000158 \\
  15 & 0.005772 & 0.001070 & 0.000202 & 0.000060 & 0.000051 \\
  16 & 0.005772 & 0.001072 & 0.000212 & 0.000086 & 0.000079 \\
  17 & 0.005772 & 0.001071 & 0.000208 & 0.000075 & 0.000067 \\
  18 & 0.005772 & 0.001070 & 0.000200 & 0.000046 & 0.000030 \\
  \hline
\end{tabular}
\end{center}
\end{table}

\begin{Example}\label{ex2}
Let $g_{\nu_1,\nu_2}$ be the density from a two-parameter family~\eqref{eqDensn1n2}, and let $f_{\nu_1,\nu_2}$ be the related distribution function described by the formula~\eqref{eqDistribn1n2}, with the following parameters: $\nu_1 = \nu+1$, $\nu_2 = \nu$ for $\nu = 2$. The modeling formula is given in Proposition~\ref{exXiModeling2}.

The function $g_{3,2}$ is continuous, differentiable everywhere on $\Omega$, and its second-order derivative exists almost everywhere on $\Omega$: $g_{3,2} \in C^1(\Omega)$, $g_{3,2}^{(2)} \in L_2(\Omega)$, hence $g_{3,2} \in W_2^2(\Omega)$. Considering the space $W_2^s(\Omega)$ with real non-negative $s$ (see Remark~\ref{remSreal}), we affirm that $g_{3,2} \in W_2^s(\Omega)$ for $s < \min\{\nu_1,\nu_2\} + 1/2 = 5/2$ ($\sup s = 5/2$). The degree of smoothness of $f_{3,2}$ is greater, and $f_{3,2} \in W_2^{s+1}(\Omega)$. This corresponds to the formulae~\eqref{eqTestConvergenceRate} when substituting given parameters $\nu_1$ and $\nu_2$:
\[
  \varepsilon_g^{\langle n \rangle} \approx \frac{C^g}{n^{5/2}}, \ \ \ \varepsilon_f^{\langle n \rangle} \approx \frac{C^f}{n^{7/2}}.
\]

Theorem~\ref{thmMain} implies that in order to achieve the required approximation accuracy $B(g_{3,2},\bar{g}_{3,2}^{\langle n \rangle}) = \gamma \leqslant c N^{-5/12}$, $c > 0$, with the limit value $s = 5/2$, the conditionally optimal parameters should satisfy the relationship $N_\mathrm{opt} \asymp n_\mathrm{opt}^6$; this is illustrated by the statistical modeling results from Tables~\ref{tabErrorG2} and~\ref{tabErrorF2}. In the row ``$*$'' of Table~\ref{tabErrorG2}, if the projection expansion length $n$ is doubled, then the deterministic component of error decreases by approximately $2^{5/2} \approx 5.6$ times (in Table~\ref{tabErrorF2}, by approximately $2^{7/2} \approx 11.2$ times). In the rest of this table, errors corresponding to optimal parameters $n_\mathrm{opt}$ and $N_\mathrm{opt}$ are shown in bold, and they are consistent with the relationship $N_\mathrm{opt} \asymp n_\mathrm{opt}^6$. Table~\ref{tabErrorF2} shows higher accuracy of the distribution function estimation.

In our calculations with the formulae for errors, the following values are used (squared norms of the functions $g_{3,2}$ and $f_{3,2}$):
\[
  \int_{-1}^1 g_{3,2}^2(x) dx = \frac{5928}{10115}, \ \ \ \int_{-1}^1 f_{3,2}^2(x) dx = \frac{7801}{10115}.
\]

Figure~\ref{picTables} contains a graphical representation of the numerical experiment. The meaning of different axes is described in Example~\ref{ex1}. This example corresponds to the right part of Figure~\ref{picTables} with two surfaces. The first one (\textcolor{red}{red}) is constructed from the obtained computational error given in Table~\ref{tabErrorG2}, and the second one (\textcolor{blue}{blue}, with marked nodes) corresponds to the theoretical error according to the formula~\eqref{eqErrorG} with $s = 5/2$:
\[
  \epsilon_g^{\langle n \rangle} \approx \sqrt{\frac{C_1^g n}{N} + \frac{C_2^g}{n^5}},
\]
where constants $C_1^g = 0.890$ and $C_2^g = 0.545$ are chosen from the same condition as in Example~\ref{ex1}.
\end{Example}

\begin{table}[p]
\begin{center}
\renewcommand{\arraystretch}{1.05}
\caption{The approximation errors $\varepsilon_g^{\langle n \rangle}$ \textcolor{blue}{(in the row ``$*$'')} and $\bar\varepsilon_g^{\langle n \rangle}$ \textcolor{blue}{(in the remaining rows)} for sample sizes $N = 2^{m+9}$ (Example~\ref{ex2})}\label{tabErrorG2}
\begin{tabular}{|c|c|c|c|c|c|}
  \hline
  $m$ & \ \ \ \ $n = 4$\ \ \ \ &\ \ \ \ $n = 8$\ \ \ \ &\ \ \ $n = 16$\ \ \ &\ \ \ $n = 32$\ \ \ &\ \ \ $n = 64$\ \ \ \\
  \hline
  \hline
  $*$ & 0.009757 & 0.001782 & 0.000327 & 0.000059 & 0.000011 \\
  \hline
  \hline
   0 & 0.095213 & 0.096279 & 0.110003 & 0.163112 & 0.202505 \\
   1 & 0.069075 & 0.072595 & 0.077403 & 0.101639 & 0.156190 \\
   2 & 0.037189 & 0.042272 & 0.055580 & 0.075076 & 0.109453 \\
   3 & \textbf{0.019912} & 0.020148 & 0.028046 & 0.044567 & 0.077312 \\
   4 & 0.015445 & 0.015796 & 0.018872 & 0.035310 & 0.052812 \\
   5 & 0.010799 & 0.007478 & 0.014730 & 0.021591 & 0.037072 \\
   6 & 0.011292 & 0.007615 & 0.010409 & 0.013876 & 0.026574 \\
   7 & 0.012527 & 0.009047 & 0.011426 & 0.013697 & 0.021402 \\
   8 & 0.010221 & 0.004889 & 0.006611 & 0.008514 & 0.014083 \\
   9 & 0.009919 & \textbf{0.002619} & 0.003359 & 0.004879 & 0.008398 \\
  10 & 0.009851 & 0.002347 & 0.003053 & 0.004197 & 0.005689 \\
  11 & 0.009779 & 0.002255 & 0.001818 & 0.002358 & 0.003913 \\
  12 & 0.009783 & 0.002329 & 0.001788 & 0.001999 & 0.003051 \\
  13 & 0.009766 & 0.001968 & 0.001211 & 0.001580 & 0.002321 \\
  14 & 0.009769 & 0.001911 & 0.000960 & 0.001176 & 0.001615 \\
  15 & 0.009761 & 0.001839 & \textbf{0.000798} & 0.000778 & 0.001129 \\
  16 & 0.009757 & 0.001793 & 0.000472 & 0.000455 & 0.000766 \\
  17 & 0.009757 & 0.001789 & 0.000409 & 0.000425 & 0.000553 \\
  18 & 0.009758 & 0.001789 & 0.000382 & 0.000298 & 0.000388 \\
  \hline
\end{tabular}
\end{center}
\end{table}

\begin{table}[p]
\begin{center}
\renewcommand{\arraystretch}{1.05}
\caption{The approximation errors $\varepsilon_f^{\langle n \rangle}$ \textcolor{blue}{(in the row ``$*$'')} and $\bar\varepsilon_f^{\langle n \rangle}$ \textcolor{blue}{(in the remaining rows)} for sample sizes $N = 2^{m+9}$ (Example~\ref{ex2})}\label{tabErrorF2}
\begin{tabular}{|c|c|c|c|c|c|}
  \hline
  $m$ & \ \ \ \ $n = 4$\ \ \ \ &\ \ \ \ $n = 8$\ \ \ \ &\ \ \ $n = 16$\ \ \ &\ \ \ $n = 32$\ \ \ &\ \ \ $n = 64$\ \ \ \\
  \hline
  \hline
  $*$ & 0.002779 & 0.000139 & 0.000014 & $1.40 \cdot 10^{-6}$ & $1.35 \cdot 10^{-7}$ \\ 
  \hline
  \hline
   0 & 0.050042 & 0.050077 & 0.050178 & 0.050338 & 0.050393 \\
   1 & 0.032134 & 0.032449 & 0.032474 & 0.032568 & 0.032637 \\
   2 & 0.020694 & 0.020649 & 0.020779 & 0.020854 & 0.020905 \\
   3 & 0.010729 & 0.010445 & 0.010513 & 0.010577 & 0.010651 \\
   4 & 0.007291 & 0.006854 & 0.006890 & 0.006980 & 0.007017 \\
   5 & 0.003914 & 0.002828 & 0.002946 & 0.003002 & 0.003052 \\
   6 & 0.003984 & 0.002933 & 0.002982 & 0.002999 & 0.003031 \\
   7 & 0.004923 & 0.004130 & 0.004160 & 0.004169 & 0.004182 \\
   8 & 0.003304 & 0.001842 & 0.001872 & 0.001879 & 0.001893 \\
   9 & 0.002904 & 0.000866 & 0.000874 & 0.000883 & 0.000893 \\
  10 & 0.002803 & 0.000414 & 0.000439 & 0.000454 & 0.000460 \\
  11 & 0.002790 & 0.000322 & 0.000301 & 0.000306 & 0.000311 \\
  12 & 0.002792 & 0.000341 & 0.000320 & 0.000322 & 0.000325 \\
  13 & 0.002782 & 0.000208 & 0.000167 & 0.000171 & 0.000174 \\
  14 & 0.002783 & 0.000215 & 0.000169 & 0.000170 & 0.000171 \\
  15 & 0.002780 & 0.000171 & 0.000108 & 0.000107 & 0.000108 \\
  16 & 0.002779 & 0.000147 & 0.000053 & 0.000052 & 0.000053 \\
  17 & 0.002779 & 0.000144 & 0.000043 & 0.000043 & 0.000043 \\
  18 & 0.002780 & 0.000158 & 0.000076 & 0.000075 & 0.000075 \\
  \hline
\end{tabular}
\end{center}
\end{table}

\begin{figure}[ht]
  \centering
  \includegraphics[bb = 0 0 562 283, scale = 0.8]{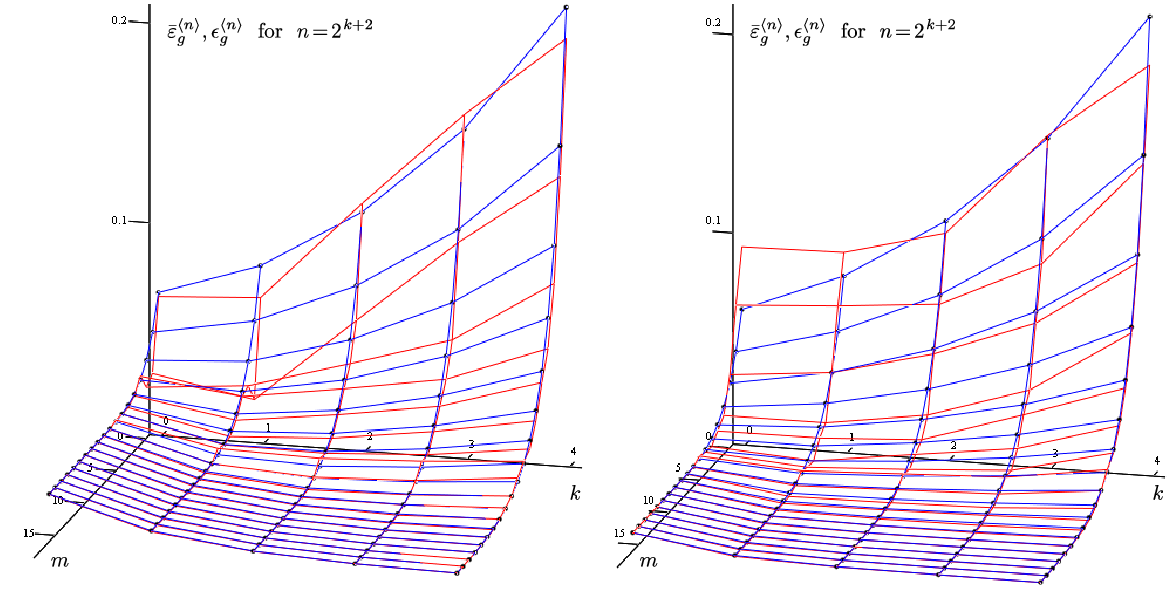}
  \vskip 2ex
  \caption{The approximation errors for densities (the left part for $g_{1,2}$ from Example~\ref{ex1}, the right part for $g_{3,2}$ from Example~\ref{ex2})}\label{picTables}
\end{figure}

\section{Comparison of Projection Density Estimate and Histogram}\label{secHistogram}

The classical method of estimating the density of a random variable is associated with a histogram~\cite{Cra_99}, which is very often used in applied problems. We consider a histogram as a projection estimate since the main results of this paper are related to projection estimates.

We can define block pulse functions on the set $\Omega$ as
\begin{equation}\label{eqDefBIF}
  \begin{gathered}
    \hat\Pi_i(x) = \frac{1}{\sqrt{h}} \left\{ \begin{array}{ll}
      1 & \text{for} ~ x \in [i h - 1,(i+1) h - 1), \\
      0 & \text{for} ~ x \notin [i h - 1,(i+1) h - 1),
    \end{array} \right. \\
    i = 0,1,\dots,L-1,
  \end{gathered}
\end{equation}
where $L$, a natural number, is the number of block pulse functions, and $h = 2/L$. It is advisable to redefine the function $\hat\Pi_{L-1}$ in such a way that it becomes continuous on the left at $x = 1$.

Block pulse functions~\eqref{eqDefBIF} form an orthonormal system of functions in $L_2(\Omega)$. This system is not complete, but it can be used to approximate an arbitrary function $u \in L_2(\Omega)$:
\[
  u(x) \approx u^{\langle L \rangle*}(x) = \sum\limits_{i=0}^{L-1} U_i^* \hat\Pi_i(x),
\]
where
\[
  U_i^* = (u,\hat\Pi_i)_{L_2(\Omega)} = \int_\Omega u(x) \hat\Pi_i(x) dx, \ \ \ i = 0,1,\dots,L-1.
\]

For $L = 2^m$ with natural $m$, the first $L$ Walsh or Haar functions on $\Omega$ are exactly expressed through block pulse functions~\eqref{eqDefBIF}. Therefore, the results of this section can easily be adapted to projection estimates of the distribution of a random variable using Walsh or Haar functions that form complete orthonormal systems of functions~\cite{GolEfiSkv_87}.

The approximation accuracy in $L_2(\Omega)$ is usually estimated as follows~\cite{MarAgo_81}:
\begin{equation}\label{eqEstimateL2C}
  \| u - u^{\langle L \rangle*} \|_{L_2(\Omega)} \leqslant \frac{C^*}{L},
\end{equation}
where $C^* > 0$ does not depend on $L$, under the condition $u \in W_2^1(\Omega)$.

As a function $u$ in the given formulae, we can use the density $g$ and the distribution function $f$. We restrict ourselves to the density $g$ only (corresponding expansion coefficients are further denoted by $G_i^*$):
\begin{equation}\label{eqGFApproxHist}
  g^{\langle L \rangle*}(x) = \sum\limits_{i=0}^{L-1} G_i^* \hat\Pi_i(x),
\end{equation}
where
\begin{align*}
  G_i^* & = \int_\Omega g(x) \hat\Pi_i(x) dx = \frac{1}{\sqrt{h}} \int_{i h - 1}^{(i+1) h - 1} g(x) dx \\
  & = \frac{f((i+1) h - 1) - f(i h - 1)}{\sqrt{h}}, \ \ \ i = 0,1,\dots,L-1.
\end{align*}

Thus, the calculation of expansion coefficients $G_{\nu_1,\nu_2,i}^*$ of the density $g_{\nu_1,\nu_2}$ from a two-parameter family~\eqref{eqDensn1n2} is reduced to the calculation of values of the corresponding distribution function $f_{\nu_1,\nu_2}$ described by the formula~\eqref{eqDistribn1n2}, i.e.,
\[
  G_{\nu_1,\nu_2,i}^* = \frac{f_{\nu_1,\nu_2}((i+1) h - 1) - f_{\nu_1,\nu_2}(i h - 1)}{\sqrt{h}}, \ \ \ i = 0,1,\dots,L-1.
\]

To calculate the approximation error for the density $g_{\nu_1,\nu_2}$, we can use the formula similar to the first of the formulae~\eqref{eqTestEstimates}:
\begin{equation}\label{eqTestEstimatesHist}
  \varepsilon_g^{\langle L \rangle*} = \| g_{\nu_1,\nu_2} - g_{\nu_1,\nu_2}^{\langle L \rangle*} \|_{L_2(\Omega)} = \biggl( \int_{-1}^1 g_{\nu_1,\nu_2}^2(x) dx - \sum\limits_{i=0}^{L-1} (G_{\nu_1,\nu_2,i}^*)^2 \biggr)^{1/2}.
\end{equation}

The histogram can be defined by the expression based on the approximation~\eqref{eqGFApproxHist}:
\[
  \bar{g}^{\langle L \rangle*}(x) = \sum\limits_{i=0}^{L-1} \bar{G}_i^* \hat\Pi_i(x),
\]
where $\bar{G}_i^*$ are estimates of expansion coefficients $G_i$ based on observations of the random variable $\xi$, $i = 0,1,\dots,L-1$. For example,
\[
  \bar{G}_i^* = \frac{1}{N} \sum\limits_{l = 1}^N \hat\Pi_i(\xi_l), \ \ \ i = 0,1,\dots,L-1,
\]
where $\xi_l$ is the $l$th realization and $N$ is the sample size (number of realizations). The value $h$ depending on $L$ specifies the histogram step.

The error of the histogram $\bar{g}^{\langle n \rangle*}$ relative to the density $g$ includes deterministic and stochastic components, and it is estimated from below by the value $\varepsilon_g^{\langle L \rangle*}$:
\[
  \bar\varepsilon_g^{\langle L \rangle*} = \| g_{\nu_1,\nu_2} - \bar{g}_{\nu_1,\nu_2}^{\langle L \rangle*} \|_{L_2(\Omega)} \geqslant \varepsilon_g^{\langle L \rangle*}.
\]

The results of calculations using the formula~\eqref{eqTestEstimatesHist} for both densities from Examples~\ref{ex1} and~\ref{ex2} are given in Tables~\ref{tabErrorG1x} and~\ref{tabErrorG2x}, respectively. They should be compared with the rows ``$*$'' in Tables~\ref{tabErrorG1} and~\ref{tabErrorG2}. Such a comparison shows the undoubted advantage of projection density estimates using Legendre polynomials. The approximation accuracy when using block pulse functions for $L = 64$ corresponds to the approximation accuracy when using Legendre polynomials for $n = 8$ in Example~\ref{ex1} and for $n = 4$ in Example~\ref{ex2}. If the number of block pulse functions $L$ is doubled, then the deterministic component of error decreases by approximately two times, and this conclusion does not depend on the degree of smoothness of the estimated density.

\begin{table}[ht]
\begin{center}
\renewcommand{\arraystretch}{1.1}
\caption{The approximation error $\varepsilon_g^{\langle L \rangle*}$ (Example~\ref{ex1})}\label{tabErrorG1x}
\begin{tabular}{|c|c|c|c|c|}
  \hline
  \ \ \ \ $n = 4$\ \ \ \ &\ \ \ \ $n = 8$\ \ \ \ &\ \ \ $n = 16$\ \ \ &\ \ \ $n = 32$\ \ \ &\ \ \ $n = 64$\ \ \ \\
  \hline
  \hline
  0.186263 & 0.094153 & 0.047203 & 0.023618 & 0.011811 \\
  \hline
\end{tabular}
\end{center}
\end{table}

\begin{table}[ht]
\begin{center}
\renewcommand{\arraystretch}{1.1}
\caption{The approximation error $\varepsilon_g^{\langle L \rangle*}$ (Example~\ref{ex2})}\label{tabErrorG2x}
\begin{tabular}{|c|c|c|c|c|}
  \hline
  \ \ \ \ $n = 4$\ \ \ \ &\ \ \ \ $n = 8$\ \ \ \ &\ \ \ $n = 16$\ \ \ &\ \ \ $n = 32$\ \ \ &\ \ \ $n = 64$\ \ \ \\
  \hline
  \hline
  0.171305 & 0.089038 & 0.044945 & 0.022526 & 0.011270 \\
  \hline
\end{tabular}
\end{center}
\end{table}

The problem of the conditional optimization of the algorithm for obtaining the histogram was solved in~\cite{Che_DAN62}. In this problem, optimal parameters satisfy the relationship $N_\mathrm{opt} \asymp L_\mathrm{opt}^3$ which corresponds to the inequality~\eqref{eqEstimateL2C}. A generalization of this result in the context of stochastic differential equations can be found in~\cite{Ave_Math24}. Choosing parameters in this way, the computational error $\bar\varepsilon_g^{\langle L \rangle*}$ will be approximately twice as large as its deterministic component $\varepsilon_g^{\langle L \rangle*}$, since the conditional optimization of the algorithm for obtaining the histogram assumes the equalization of deterministic and stochastic components.

For densities from Examples~\ref{ex1} and~\ref{ex2}, to reduce the approximation error using the histogram by two times, it is necessary to increase the number of block pulse functions $L$ by $2$ times and the sample size $N$ by $2^3 = 8$ times. The projection estimates of densities using Legendre polynomials are more effective in these examples. To reduce the approximation error by two times in Example~\ref{ex1}, it suffices to increase the projection expansion length $n$ by $2^{2/3} \approx 1.587$ times and the sample size $N$ by $(2^{2/3})^4 = 2^{8/3} \approx 6.350$ times. In Example~\ref{ex2}, it suffices to increase the projection expansion length $n$ by $2^{2/5} \approx 1.320$ times and the sample size $N$ by $(2^{2/5})^6 = 2^{12/5} \approx 5.278$ times. In this case, increasing $n$ and $N$ implies their subsequent rounding-up.

\section{Conclusions}\label{secConcl}

A distinctive feature of algorithms proposed in this paper is the jointly obtaining projection estimates of the density and distribution function of a random variable. In addition, the problem of the conditional optimization of algorithms is solved under the assumption that estimated functions belong to the Sobolev--Slobodetskij space. This assumption allows one to use a more precise estimate of the convergence rate for projection expansions compared with the case where the ordinary Sobolev space or the space of continuously differentiable functions are used. Numerical experiments confirm the obtained theoretical results.

The proposed algorithms and conditionally optimal relationships between parameters ensure the minimum complexity when obtaining projection estimates to achieve the required approximation accuracy. These algorithms are important for studying the properties of solutions to stochastic differential equations~\cite{AveRyb_NAA24, Gio_Aero25}, including stochastic differential equations with a Poisson component~\cite{AveRyb_RJNAM20, SongWu_JOTA23} or with the switching right-hand side~\cite{AveRyb_RJNAM07, Bor_Math24}.

\end{document}